\newcounter{llst}
\newenvironment{abet}{\begin{list}{\rm (\alph{llst})}{\usecounter{llst}
\setlength{\itemindent}{0em} \setlength{\leftmargin}{3em}
\setlength{\labelwidth}{2em} \setlength{\labelsep}{1em}}}{\end{list}}
\newenvironment{numm}{\begin{list}{\rm (\roman{llst})}{\usecounter{llst}
\setlength{\itemindent}{0em} \setlength{\leftmargin}{3.5em}
\setlength{\labelwidth}{2.5em} \setlength{\labelsep}{1em}}}{\end{list}}
\newtheorem{theorem}{Theorem}[section]
\newtheorem{corollary}[theorem]{Corollary}
\newtheorem{definition}[theorem]{Definition}
\newtheorem{expl}[theorem]{Example}
\newtheorem{lemma}[theorem]{Lemma}
\newtheorem{proposition}[theorem]{Proposition}
\newtheorem{rmrk}[theorem]{Remark}
\newenvironment{example}{\begin{expl} \rm}{\hfill $\blacklozenge$ \end{expl}}{}
{}
\begin{document}

\title{\textbf{Middlemen and Contestation \\ in Directed Networks}}
\author{Owen Sims\footnote{Centre for Data Science and Scalable Computing, Institute of Electronics, Communications and Information Technology (ECIT), Queen's University Belfast, Northern Ireland Science Park, Queens Road, Belfast BT3 9DT, UK. Email: \href{osims01@qub.ac.uk}{osims01@qub.ac.uk}} 
\and Robert P. Gilles\footnote{Queen's University Management School, Riddel Hall, 185 Stranmillis Road, Belfast BT9 5EE, UK. Email: \href{r.gilles@qub.ac.uk}{r.gilles@qub.ac.uk}}
}
\date{December 2016}
\maketitle

\begin{abstract}
\singlespace\noindent
This paper studies middlemen---or ``critical nodes''---that intermediate flows in a directed network. The contestability of a node is introduced as a network topological concept of competitiveness meaning that an intermediary's role in the brokering of flows in the network can be substituted by a group of other nodes. We establish the equivalence of uncontested intermediaries and middlemen.

The notion of node contestability gives rise to a measure that quantifies the control exercised by a middleman in a network. We present a comparison of this middleman centrality measures with relevant, established network centrality measures. Furthermore, we provide concepts and measures expressing the \emph{robustness} of a middleman as the number of links or nodes that have to be added to or removed from the network to nullify the middleman's power.

We use these concepts to identify and measure middleman power and robustness in two empirical networks: \emph{Krackhardt's advice network} of managers in a large corporation and the well-known \emph{Florentine marriage network} as a proxy of power brokerage between houses in Renaissance Florence.
\end{abstract}

\bigskip
\begin{singlespace}
\noindent \textbf{Keywords:} Network; critical node; middleman; contestation; centrality measures; middleman power measure; middleman robustness.
\end{singlespace}

\thispagestyle{empty}

\newpage

\setcounter{page}{1} \pagenumbering{arabic}

\section{The critical role of middlemen in networks}

Research has recognised networks as important descriptors of social and economic processes \citep{Watts2004, Jackson2008, Newman2010, Barabasi2016}. Node centrality aims to identify the most influential nodes in the network. This paper investigates those nodes that are critical for the flow of information and trade in a network. Such critical nodes---or ``middlemen''---have traditionally been considered in undirected networks only.\footnote{Middlemen are defined as so-called \emph{cut nodes} in mathematical graph theory \citep{Wilson2010}. The importance of these critical nodes as conduits of information flows in social networks has been recognised by \citet{Burt1992,Burt2004,Burt2005} and \citet{Burt2010} and in economic networks by \citet{KalaiMiddlemen1978,RubinsteinWolinsky1987,JacksonWolinsky1996,GillesChakrabarti2006} and \citet{Siedlarek2015}. }

Here, we consider these middlemen in the more general context of directed networks. We introduce a middleman as a node that can block the information flow from at least one node to another. If one applies this definition to undirected networks, one arrives at the standard notion of a middleman as a singleton node cut set. In the context of directed networks this is no longer the case: Its removal does not necessarily break up the whole network; it just compromises the information flow for at least one pair of nodes in the network. The brokerage function of middlemen allows them to be highly extractive to both directly and indirectly connected nodes \citep{KalaiMiddlemen1978}.

Naturally, the existence of middlemen is closely related to the interactive environment represented by that network. \citet{GillesDiamantaris2013} show that middlemen have the potential to be highly exploitive given a lack of alternative pathways along which to conduct the required socio-economic interactions. The main conclusion from this research is a non-trivial extension of how economists and social scientists perceive the architecture and dynamics of exchange systems, how the presence of a middleman can hold a system together, and, as a consequence of their position, can act as rent-extracting monopolists with excessive bargaining power \citep[Chapter~11]{EasleyKleinberg2010}.

We reduce these notions to a general definition of ``contestation'' in directed networks. Our notion of contestation refers to a network topological property that indicates whether interaction can be conducted without the involvement of a certain node. Hence, a node is contested if alternative pathways are available to establish interaction between pairs of nodes in the network. Our main result shows that there is a formal duality between the existence of middlemen and network contestability. In particular, an intermediary node is a middleman if and only if it is uncontested.

Despite the wide acknowledgement within social network analysis of the significance of middlemen, centrality measures do not necessarily identify these critical nodes as being important even though their removal may deteriorate the functionality of the network as a whole. As the notion of centrality came to the fore, \citet[p.~219]{Freeman1979} argued that central nodes were those ``in the thick of things''. To exemplify this, he used an undirected star network consisting of five nodes. The middle node, at the centre of the star, has three advantages over the other nodes: It has more ties; it can reach all the others more quickly; and it controls the flow between the others.

Given this sentiment, one would expect that betweenness centrality measures \citep{Freeman1977} would capture the influence that such middlemen exert. However, we show that this is not the case. Instead, we propose a new \emph{middleman power measure} that exhibits the desired properties. We apply this middleman power measure to study two very well known (historical) directed networks from the literature. These applications allow us to make an in-depth comparison with betweenness centrality and eigenvector centrality \citep{Bonacich1987}.

In \citet{Krackhardt1987}'s advice network, our middleman power measure confirms Krackhardt's original assessment of the most influential node. Also, for the Florentine marriage network of the early Renaissance \citep{Padgett1993, Padgett1994}, we conclude that our middleman power measure clearly ranks the more powerful middlemen higher than the less powerful, confirming with the reported historical analysis.  Furthermore, we show that, despite the importance of middlemen in these networks, this positional feature is not properly and fully identified by conventional centrality measures.

Finally, we consider the \emph{robustness} of middleman positions in the network in light of potential changes to the network topology. The more robust a middleman's position, the more changes in the network topology are required to reduce the position of that middleman. Robustness, therefore, refers to a dynamic property of the network topology.

In particular, we look at two fundamentally different measurements of middleman robustness. First, we consider changes to the arcs and links that are present in the network. Thus, the robustness of a middleman is the number of arcs that need to be deleted from or added to the network in order to contest the middleman. Second, we define a node-based robustness concept that counts the number of nodes that need to be removed from the network to contest the middleman. We show that these robustness measures are closely related and that the arc-based measure is sufficient to indicate the robustness of a middleman.

We apply these robustness measures to the Florentine marriage network and identify that the Medici house had a less robust position than considered in the literature. In particular, the Pazzi is shown to have a more robust position, supporting an alternative explanation of the rivalry between these two houses in renaissance Florence.

\paragraph{Outline.}

We follow this introduction with Section~\ref{sec:NetworkPreliminaries}, which discusses the required notions of network science. Section~\ref{sec:MiddlemanContestability} introduces the notions of strong and weak middlemen and considers the dual notion of network contestability. Section~\ref{sec:middlemanPower} discusses our measure of middleman power, which assigns a quantitative expression to a node's brokerage power. Additionally, the section provides three measures regarding the robustness of middleman positions in the network. Section~\ref{sec:empiricalNetwork} investigates two empirical case studies of social networks where middlemen have been identified as critical: Krackhardt's advice network and the Florentine marriage network. Section~\ref{sec:Conclusion} concludes.

\section{Network preliminaries}
\label{sec:NetworkPreliminaries}

In this paper we focus on networks as representations of relational infrastructures that map how entities such as individuals and firms communicate and exchange information, money or goods. These entities are represented by nodes, while relationships are represented by directed arcs between nodes. Throughout we consider these infrastructures as carriers of flows of physical commodities in a logistics framework or of information. Middlemen are identified as those entities that exercise control over the flows in the network.
\begin{definition}
	A (directed) \textbf{network} is a pair $(N,D)$ where $N = \{1,2, \ldots ,n\}$ is a finite set of \textbf{nodes} and $D \subset \{ (i,j) \mid i,j \in N$ and $i \neq j \}$ is a set of \textbf{arcs}, being directed relationships from one node to another.
\end{definition}

\noindent
Throughout, we denote an arc $(i,j) \in D$ by $ij \in D$ and a directed network $(N,D)$ by $D$ unless $N$ is ambiguous. We interpret the nodes $N$ in the network as a set of entities that are connected through the relational infrastructure represented by the arcs in $D$. Thus, $D$ describes the infrastructure within which the nodes are embedded. We point out that $D$ is \emph{irreflexive} in the sense that $(i,i) \notin D$ for any $i \in N$.

We assume throughout that the infrastructure represented by $D$ supports flows of information, money and/or goods between the nodes in $N$ and that the control of these flows is of critical importance to the functionality of this infrastructure.\footnote{We emphasise that we can generalise this setting by assuming that each arc $ij \in D$ has a certain capacity that limits the flow on that arc. In this paper we limit ourselves to the case in which these arcs have unlimited capacity and there are no bounds on the flows in the network. Hence, either a connection between two nodes $i$ and $j$ exists or not.}

An \emph{undirected} network is defined as a network $(N,D)$ such that all arcs are reciprocated: $ij \in D$ if and only if $ji \in D$. An undirected network represents a special case of a network in that there exist arbitrary flows between any two linked nodes.

\paragraph{Walks and paths.}

A \emph{walk} from node $i$ to node $j$ in a network $D$---or simply an $ij$-walk in $D$---is an ordered list of nodes $\Omega_{ij} (D) = \left( i_1, i_2 , \ldots ,i_m \right)$ such that $m \geqslant 3$, $i_1 =i$, $i_m =j$ and $i_k i_{k+1} \in D$ for all $k \in \{ 1, \ldots ,m-1 \}$. Clearly, a walk is a sequence of adjacent nodes in the network. Walks might revisit nodes and, therefore, might contain loops.

The notion of a path excludes loops and can be formulated using set-theoretic notions. Formally, a \textit{path} from $i$ to $j$ in a network $D$---or, simply, an $ij$-\emph{path}---is a finite subset of nodes $W_{ij} (D) = \left\{ i_{1}, \ldots ,i_{m} \right\} \subset N$ with $m \geqslant 3$, $i_1 =i$, $i_m =j$ and $i_{k}i_{k+1} \in D$ for every $k=1, \ldots ,m-1$. Therefore, to every $ij$-path $W_{ij} (D) = \{ i_1, \ldots ,i_m \}$ there corresponds a unique $ij$-walk given by $\Omega_{ij} (D) = ( i_1, \ldots ,i_m )$. We remark that since each element in a set is listed only once, every path is a walk without any loops.

In many cases there are multiple paths from $i$ to $j$ in a network $D$. If this is required, we denote $W_{ij}^{v}(D)$ as the $v$-th distinct path from $i$ to $j$ in $D$. This gives rise to the collection $\mathcal{W}_{ij}(D)= \left\{ W_{ij}^{1}(D), \ldots ,W_{ij}^{V}(D) \right\}$, consisting of all distinct paths $W_{ij} (D)$ from node $i$ to node $j$ in $D$, where $V$ is the number of distinct $ij$-paths in $D$. Here, $\mathcal{W}_{ij}(D)= \varnothing$ denotes that there is no path from node $i$ to node $j$ in the network $D$, while $\mathcal{W}_{ij}(D) \neq \varnothing$ indicates that there exists at least one path from $i$ to $j$ in $D$.

\paragraph{Connectedness.}

We now say that node $i$ is \textit{connected to} node $j$ if $\mathcal{W}_{ij}(D) \neq \varnothing$, i.e., there is at least one path from node $i$ to node $j$ in the network $D$. A network $D$ is \emph{weakly connected} if for all $i,j \in N$ either $i$ is connected to $j$, or $j$ is connected to $i$, or both.

Two distinct nodes $i,j \in N$ are \textit{strongly connected} in the network $D$ if $\mathcal{W}_{ij}(D) \neq \varnothing$ as well as $\mathcal{W}_{ji}(D) \neq \varnothing$, indicating that there is always a path from node $i$ to node $j$ and back. A network $D$ is \emph{strongly connected} if $\mathcal{W}_{ij} \neq \varnothing$ and $\mathcal{W}_{ji} \neq \varnothing$ for all nodes $i,j \in N$. Hence, any strongly connected network is always weakly connected. This implies all nodes are bi-connected in the sense that bilateral flows exists between any two nodes.

A subset of nodes $M \subset N$ is a \emph{component} of $D$ if $(M, D_M)$ with $D_M = D \cap (M \times M)$ is weakly connected and and there is no node $i \in N \setminus M$ outside node set $M$ such that $(M+i, D_{M+i})$ is weakly connected.\footnote{Here we employ that notation that $S+t = S \cup \{ t \}$ for every set $S$ and $t \notin S$.} Clearly, every network contains components. In fact, a weakly connected network consists of exactly one component, namely $D$ itself. The next insight states the converse of this property.
\begin{lemma}
	A network $D$ on node set $N$ is \emph{not} weakly connected if and only if it consists of two or more components.
\end{lemma}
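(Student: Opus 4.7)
The plan is to prove both directions by leveraging the fact that the components of $D$ partition the node set $N$, so that the existence of two or more components is precisely what obstructs weak connectedness.

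First I would establish the partition property as a preliminary observation: every node $i \in N$ belongs to exactly one component. For existence, note that the singleton $\{i\}$ is trivially weakly connected, so one can take a maximal weakly connected superset of $\{i\}$ (finiteness of $N$ guarantees the maximum is attained); by definition this maximal set is a component. For uniqueness, I would argue that if $M_1$ and $M_2$ are two components with $i \in M_1 \cap M_2$, then the union $M_1 \cup M_2$ is still weakly connected (any two nodes in the union can be joined through $i$ by concatenating walks from the respective components, using that weak connectedness is symmetric through the ``either direction'' clause). By maximality of $M_1$ and $M_2$, this forces $M_1 = M_1 \cup M_2 = M_2$.

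For the forward direction, suppose $D$ is not weakly connected. Then there exist $i,j \in N$ with $\mathcal{W}_{ij}(D) = \varnothing$ and $\mathcal{W}_{ji}(D) = \varnothing$. Let $M_i$ and $M_j$ be the (unique) components containing $i$ and $j$, respectively. If $M_i = M_j$, then $i$ and $j$ lie in the same weakly connected subnetwork, contradicting the absence of paths in either direction. Hence $M_i \neq M_j$, giving at least two components.

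For the reverse direction, suppose $D$ has two distinct components $M_1$ and $M_2$. Pick $i \in M_1$ and $j \in M_2$. If either $\mathcal{W}_{ij}(D) \neq \varnothing$ or $\mathcal{W}_{ji}(D) \neq \varnothing$, then $M_1 \cup M_2$ would be weakly connected, contradicting the maximality in the definition of a component (specifically, one could add nodes of $M_2$ to $M_1$ while preserving weak connectedness). Thus no path joins $i$ and $j$ in either direction, and $D$ is not weakly connected.

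The main obstacle is the uniqueness half of the preliminary partition claim, because one needs to handle the directional nature of arcs carefully when arguing that a union of two overlapping weakly connected sets is still weakly connected; the key point to verify is that the definition of weak connectedness only requires a path in \emph{some} direction between each pair, so concatenating a walk from $a \in M_1$ to $i$ (in some direction) with a walk from $i$ to $b \in M_2$ (in some direction) does yield the required one-sided connectivity, even though the directions may not align.
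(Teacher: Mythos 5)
The paper states this lemma without any proof, so there is nothing to compare your argument against; evaluating it on its own terms, there is a genuine gap at exactly the point you flag as the ``main obstacle.'' The paper's notion of weak connectedness is \emph{unilateral}: for every pair $i,j$ there must be a directed path from $i$ to $j$ \emph{or} from $j$ to $i$. Under this definition the union of two overlapping weakly connected sets need \emph{not} be weakly connected, contrary to what you assert. Take $N=\{a,i,b\}$ with $D=\{(a,i),(b,i)\}$: both $\{a,i\}$ and $\{b,i\}$ are weakly connected (indeed both are components), but in the union there is no directed path from $a$ to $b$ nor from $b$ to $a$ --- concatenating the two walks fails precisely when both point into (or both out of) the shared node $i$, which is one of the four possible direction patterns. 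Consequently components need not partition $N$ and your uniqueness claim is false. The forward direction nevertheless survives, because it only needs the existence half of your preliminary observation: every node lies in at least one component (grow $\{i\}$ greedily by single nodes until no further addition preserves weak connectedness), and if $D$ had exactly one component that component would have to contain every node, hence equal $N$, making $D$ weakly connected.

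The reverse direction has a second, independent problem. The paper defines a component as a weakly connected node set to which no \emph{single} node can be added while preserving weak connectedness, so even if $M_1\cup M_2$ were weakly connected this would not contradict the maximality of $M_1$: the nodes of $M_2$ would have to be added one at a time, and each intermediate step can fail even when the full union is fine. In fact, under the paper's literal definitions the reverse implication appears to be false: in $N=\{1,2,3,4\}$ with $D=\{(1,2),(1,3),(3,4),(4,2)\}$ the whole network is weakly connected, yet $\{1,2\}$ is a component distinct from the component $N$ (adding node $3$ alone leaves $2$ and $3$ mutually unreachable, and adding node $4$ alone leaves $1$ and $4$ mutually unreachable). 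To make the lemma --- and your argument --- correct, one has to read ``weakly connected'' in the standard sense of connectedness of the underlying undirected graph, in which case semi-walks concatenate regardless of arc orientation, your partition claim is routine, and both directions go through. You should state explicitly that you are using that reading; under the paper's written definition the reverse direction cannot be repaired as written.
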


\noindent
Similarly, a subset of nodes $M \subset N$ is a \emph{strong component} of $D$ if $(M, D_M)$ with $D_M = D \cap (M \times M)$ is strongly connected and and there is no node $i \in N \setminus M$ outside node set $M$ such that $(M+i, D_{M+i})$ is strongly connected. Clearly, every strong component of a network $D$ is indeed a component of $D$, but it is not necessarily the case that a component is a strong component.

\paragraph{Successors and predecessors.}

Connectedness in a network leads to several auxiliary concepts. We denote by
\begin{equation}
	s_i (D) = \{ j \in N \mid ij \in D \}
\end{equation}
the set of \emph{direct successors} of node $i$ in network $D$. Similarly, we denote by
\begin{equation}
	p_i (D) = \{ j \in N \mid ji \in D \} \equiv \{ j \in N \mid i \in s_j (D) \}
\end{equation}
the set of \emph{direct predecessors} of node $i$ in network $D$. We note that nodes $h \in p_i (D) \cap s_i (D)$ are the ones that are connected to node $i \in N$ by an undirected link. Hence, $D$ is an undirected network if and only if $s_i (D) = p_i (D)$ for all nodes $i \in N$.

Furthermore, a node $j$ is called a \emph{successor} of node $i$ in network $D$ if $\mathcal{W}_{ij} (D) \neq \varnothing$. Similarly, node $i$ is called a \textit{predecessor} of node $j$ in network $D$ if $j$ is a successor of $j$ in $D$, i.e., $\mathcal{W}_{ij} (D) \neq \varnothing$. This gives rise to the following two node sets in relation to some node $i \in N \colon$
\begin{gather}
	S_i (D)= \left\{ j \in N \, \left| \, \mathcal{W}_{ij}(D) \neq \varnothing \right. \right\} \\[1.5ex]
	P_i (D)= \left\{ j \in N \, \left| \, \mathcal{W}_{ji}(D) \neq \varnothing \right. \right\}
\end{gather}
$S_i$ is the set of successors of node $i$ in network $D$, while $P_i$ is the set of predecessors of node $i$ in network $D$. We note that $i \notin S_i(D)$ as well as $i \notin P_i (D)$. Obviously, $s_i (D) \subset S_i (D)$ and $p_i (D) \subset P_i (D)$. Node $k$ is an \emph{indirect successor} of $i$ in $D$ if $k \in S_{i}(D) \setminus s_i (D)$ meaning that $\mathcal{W}_{ik}(D) \neq \varnothing$. Thus, $i$'s successor set $S_i$ is composed of all of $i$'s direct and indirect successors.

The fact that a node is not a member of its successor and predecessor sets gives rise to the following two definitions.
\begin{gather}
	\overline{S}_{i}(D) = S_{i}(D) \cup \{i\} \\[1.5ex]
	\overline{P}_{i}(D) = P_{i}(D) \cup \{i\}
\end{gather}
We denote by $\overline{S}_{i}(D)$ the \emph{reach} of node $i$ in network $D$, while $\overline{P}_{i}(D)$ denotes node $i$'s \emph{origin} in network $D$. These auxiliary concepts are required in later analysis in this paper.

For node $i \in N$ its \emph{out-degree} in $D$ is defined by $d_{i}^{+} = \# s_{i}(D)$ and its \emph{in-degree} in $D$ by $d_{i}^{-}= \# p_{i}(D)$. Now, $d_{i} = \# \{ s_{i} (D) \cup p_{i} (D) \}$ denotes node $i$'s (overall) \emph{degree} in network $D$.\footnote{Here, $\# S$ denotes the cardinality of the finite set $S$.} Clearly, $d_i \leqslant \min \{ d_i^+ + d_i^- , n-1 \}$.

The node set $N$ can now be partitioned into four distinct subsets of nodes. The classification of nodes is based on their connectivity properties, which gives rise to dividing the nodes into sources, sinks, leaves and intermediaries. This is formalised as follows:
\begin{definition}
	Let $D$ be a directed network on node set $N$ and let $i \in N$ be some node.
	\begin{numm}
		\item Node $i$ is a \textbf{source} in network $D$ if $d_{i}^{-} = 0$ and $d_{i}^{+} \geqslant 1$. We denote
		\[
		N^+_D = \{ i \in N \mid i \mbox{ is a source in } D \, \} .
		\]
	\item Node $i \in N$ is a \textbf{sink} in network $D$ if $d_{i}^{-} \geqslant 1$ and $d_{i}^{+} = 0$. We denote
	\[
		N^-_D = \{ i \in N \mid i \mbox{ is a sink in } D \, \} .
	\]
	\item Node $i \in N$ is a \textbf{leaf} in network $D$ if $d_i^+ = d_i^- = d_i =1$, i.e., $i$ is connected to exactly with one other node through an undirected link. We denote
	\[
	L_D =  \{ i \in N \mid i \mbox{ is a leaf in } D \, \} .
	\]
	\item Node $i \in N$ in an \textbf{intermediary} in network $D$ if $d_i \geqslant 2$ and $d_{i}^{-} \geqslant 1$ as well as $d_{i}^{+} \geqslant 1$. We denote
	\begin{align*}
		M_D & = \{ i \in N \mid i \mbox{ is an intermediary in } D \, \} = \\
		& = \{ i \in N \mid \mbox{There exist } j \in p_i (D) \mbox{ and } h \in s_i (D) \mbox{ with } j \neq h \, \} .
	\end{align*}
	\end{numm}
\end{definition}

\noindent
We remark that the node set is now partitioned into these constructed subsets:
\begin{equation}
	N = N^+_D \cup N^-_D \cup L_D \cup M_D
\end{equation}
where these constituting sets are pairwise disjoint.

\begin{definition}
	Let $D$ be a network on node set $N$. For every intermediary $i \in M_D$ we denote by
	\begin{equation}
		D - i = D_{N \setminus \{ i \} } = \left\{ \, jh \in D \mid j,h \in N \setminus \{ i \} \, \right\} .
	\end{equation}
	the \textbf{reduced network} after removing intermediary $i$ from the constituting node set.
\end{definition}

\noindent
Therefore, $D-i$ is the restricted network that results after the removal of the intermediary $i$ and all arcs to and from node $i$ from the network $D$.

\section{Middlemen and contestability}
\label{sec:MiddlemanContestability}

In this section we introduce the notion of a middleman in a (directed) network and discuss the notion of contestation as an equivalent, dual conception of these middlemen.

\subsection{Defining middlemen}

We identify \textit{middlemen}---or ``critical nodes''---as intermediary nodes $i \in M_D$ that have the ability to broker certain flows in network $D$. Following the established literature in graph theory and network analysis, a critical node in an undirected network is equivalent to the graph theoretical notion of a \emph{cut node}. Such nodes can severely disrupt and manipulate the typical operations on a network by disconnecting the network into two or more components \citep{KalaiMiddlemen1978,Burt1992,JacksonWolinsky1996,GillesChakrabarti2006}. Here, we extend this concept to arbitrary (directed) networks. In our general context, our conception focusses on the disruption of the connectivity of two or more nodes.

\begin{definition} \label{middleman}
Let $D$ be a network on node set $N$ and let $h \in M_D$ be an intermediary in $D$.
\begin{abet}
\item Node $h$ is an \textbf{$ij$-middleman} in the network $D$, where $i,j \in N$ and $i \neq j$, if
\begin{equation} \label{ijmiddleman}
h \in \cap \mathcal{W}_{ij}(D) \setminus \{ i,j \} = \left( \, W_{ij}^{1}(D) \cap  \ldots  \cap W_{ij}^{V}(D) \, \right) \setminus \{ i,j \}
\end{equation}
where $V \geqslant 1$ is the number of distinct paths from $i$ to $j$. Now, $M_{ij}(D) \equiv \cap \mathcal{W}_{ij}(D) \setminus \{ i,j \}$ denotes the set of all $ij$-middlemen in $D$.

\item An intermediary $h \in M_D$ is a \textbf{middleman} in network $D$ if there exist two distinct nodes $i,j \in N$ with $i \neq j$ such that $h$ is an $ij$-middleman in $D$. \\ The set of all middlemen in $D$ is therefore given by
\begin{equation} \label{middlemanseteq}
\mathbf{M} (D) = \bigcup_{i,j \in N \colon i \neq j} M_{ij}(D)
\end{equation}
\end{abet}
\end{definition}

\noindent
A middleman in a network is an intermediary node that is a member of \emph{all} paths between at least two other nodes within the given network. Therefore, a middleman controls the flow between at least two other nodes. Conversely, a non-middleman is an intermediary that if removed from the network does not affect the connectivity of any two or more other nodes.

The following properties can be deduced directly from the above definition. We state most of these assertions without proof.

\begin{proposition}\label{TheoremIntermediary}
Let $D$ be a directed network on the node set $N = \{ 1, \ldots ,n \}$.
\begin{numm}
	\item If $n \leqslant 2$, there do not exist any middlemen in $D$.
	\item For all $i,j \in N$ with $j \in s_{i}(D)$ it holds that $M_{ij}(D) = \varnothing$. This implies that the complete directed network has no middlemen.
	\item Every middleman $i \in \mathbf M(D)$ has a local clustering co-efficient of less than $1$ in the sense of \citet[Sections 2.10 and 3.9]{Barabasi2016}.
	\item If $D$ is undirected in the sense that $ij \in D$ if and only if $ji \in D$, then $M_{ij} (D) = M_{ji} (D)$ for all $i,j \in N$.
\end{numm}
\end{proposition}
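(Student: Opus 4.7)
The plan is to dispatch the four claims in turn: parts (i) and (ii) are immediate from the definitions, part (iv) follows from a symmetry argument via path reversal, and only part (iii) requires a small construction. None of these should run more than a few lines.

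For (i), being a middleman requires being an intermediary $h$ with $d_h \geq 2$ and, in addition, being an $ij$-middleman for some pair $(i,j)$ with $h,i,j$ pairwise distinct. This demands at least three distinct nodes in $N$, so $n \leq 2$ precludes the existence of middlemen. For (ii), if $j \in s_i(D)$ then the direct arc $ij$ yields an $ij$-path whose node set is $\{i,j\}$; any candidate $ij$-middleman would have to belong to every $ij$-path, and in particular to this one, while being distinct from both endpoints, which is impossible. Hence $M_{ij}(D) = \varnothing$, and in the complete directed network this holds for every ordered pair, so $\mathbf{M}(D) = \varnothing$ by \eqref{middlemanseteq}. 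For (iv), when $D$ is undirected the map sending an $ij$-path $(i_1,\ldots,i_m)$ to its reversal $(i_m,\ldots,i_1)$ is a bijection between $\mathcal{W}_{ij}(D)$ and $\mathcal{W}_{ji}(D)$ that preserves the underlying node set; therefore $\cap\mathcal{W}_{ij}(D) = \cap\mathcal{W}_{ji}(D)$, and subtracting the symmetric set $\{i,j\}$ gives $M_{ij}(D) = M_{ji}(D)$.

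For (iii), I would argue by contrapositive. Suppose node $i$ has local clustering coefficient equal to $1$, so every pair of neighbours of $i$ (in the directed sense of \citet{Barabasi2016}) is connected by an arc. Assume, for contradiction, that $i \in M_{jk}(D)$ for some $j \neq k$. Pick any $W \in \mathcal{W}_{jk}(D)$ and let $p$ and $q$ be the nodes immediately preceding and following $i$ in $W$; then $p \in p_i(D)$ and $q \in s_i(D)$, so by the clustering hypothesis $pq \in D$. Splicing this arc into $W$ in place of the sub-walk $p \to i \to q$ produces a $jk$-walk avoiding $i$, from which one extracts a $jk$-path avoiding $i$; this contradicts $i \in M_{jk}(D)$. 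The main delicate point is pinning down the precise form of the local clustering coefficient used by Barabasi in the directed setting, so that the hypothesis really does guarantee $pq \in D$ for the specific ordered pair $(p,q) \in p_i(D) \times s_i(D)$ we need, and verifying that the shortcutted walk can always be pruned to a genuine path; neither step presents a serious difficulty.
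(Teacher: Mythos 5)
Your treatment of parts (i), (ii) and (iv) is fine; the paper states these assertions without proof, and your arguments --- counting distinct nodes for (i), the direct arc as a one-link path for (ii), path reversal for (iv) --- are the natural ones. One cosmetic caveat: the paper's formal definition of a path requires $m \geqslant 3$ nodes, which would literally exclude the two-node ``path'' $\{i,j\}$ you invoke in (ii); but the paper is internally inconsistent on this point (it also asserts $s_i(D) \subset S_i(D)$ as obvious), so your reading is clearly the intended one.

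Part (iii) is where your proposal genuinely diverges from the paper and where it has a gap. The paper argues directly: if $i$ is a middleman, then (it claims) there exist $j \in p_i(D)$ and $h \in s_i(D)$ with $jh \notin D$ \emph{and} $hj \notin D$, so a triad is missing and the clustering coefficient is below $1$. You argue the contrapositive and need the clustering hypothesis to deliver the arc $pq \in D$ for the specific \emph{ordered} pair $(p,q) \in p_i(D) \times s_i(D)$. You flag this as ``the main delicate point'' and then dismiss it --- but it is precisely where the argument fails. The local clustering coefficient in the cited sense is computed on the underlying undirected graph, so a coefficient of $1$ only guarantees that $p$ and $q$ are joined by an arc in \emph{some} direction. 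Concretely, take the directed $3$-cycle $1 \to 2 \to 3 \to 1$: the unique $13$-path is $\{1,2,3\}$, so node $2$ is a $13$-middleman, yet its two neighbours $1$ and $3$ are joined by the arc $31$, giving node $2$ a local clustering coefficient of $1$. This defeats your short-cut step (there is no arc $13$ to splice in) and is in fact a counterexample to assertion (iii) as literally stated. The paper's own one-line proof suffers from the dual defect: it never justifies why the pair $(j,h)$ can be chosen with \emph{neither} $jh$ nor $hj$ in $D$, and on the $3$-cycle it cannot be. So the point you waved away is a genuine gap rather than a formality; rescuing (iii) requires either a directed notion of clustering that only counts arcs from predecessors to successors of $i$, or an additional hypothesis ruling out configurations such as the directed $3$-cycle.
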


\begin{proof}
	For (iii), note that, if $i \in N$ is a middleman in $D$, then there exist at least one $j \in p_i (D)$ and $h \in s_i (D)$ such that $jh \notin D$ as well as $hj \notin D$. Hence, $i$, $j$ and $h$ do not form a triad in the underlying undirected network and, therefore, the local clustering coefficient of $i$ in $D$ is strictly less than 1.
\end{proof}

\noindent
From (iv) it follows that in an undirected network a node is a middleman if it rests on all paths from node $i$ to node $j$. This means that the removal of a middleman disrupts the communication between nodes $j$ and $i$. Hence, in an undirected network, a middleman is indeed a ``cut node'' or a singleton ``cut set'' as traditionally understood \citep[\S 5]{Wilson2010}.

By definition a middleman rests on all paths from node $i$ to node $j$, but does not have to rest on all paths from $j$ to $i$. This implies that the removal of a middleman in a network disrupts the interaction from $i$ to $j$, but the interaction from node $j$ to node $i$ may be unaffected. Hence, even with the removal of a middleman from a network, nodes $i$ and $j$ can still remain weakly connected. This implies in particular that the removal of a middleman might leave the network weakly connected, although not strongly connected. This insight motivates a further refinement of the notion of a middleman in a directed network.

\begin{definition} \label{strongweakmiddlemen} 
Let $D$ be a weakly connected network on node set $N = \{ 1, \ldots ,n \}$ with $n \geqslant 3$.
\begin{numm}
	\item A middleman $h \in \mathbf M (D)$ is a \textbf{strong middleman} in $D$ if the reduced network $D - h$ is not weakly connected and consists of two or more components.

\item A middleman $h \in \mathbf M (D)$ is a \textbf{(regular) middleman} in $D$ if the reduced network $D - h$ is weakly connected.
\end{numm}
\end{definition}

\noindent
Regular middlemen exist in both cyclic and acyclic directed networks. Example~\ref{identifyingmiddlemen} highlights the existence of both weak and strong middlemen in an acyclic network.

\begin{figure}[h]
\begin{center}
\begin{tikzpicture}[scale=0.5]
\draw[thick, ->] (0,2.5) -- (4.4,4.5);
\draw[thick, ->] (0,2.5) -- (4.4,0.4);
\draw[thick, ->] (5,5) -- (9.3,5);
\draw[thick, ->] (5,0) -- (9.3,0);
\draw[thick, ->] (5,5) -- (9.3,0.5);
\draw[thick, ->] (10,5) -- (14.3,2.8);
\draw[thick, ->] (10,0) -- (14.3,2.2);
\draw[thick, ->] (15,2.5) -- (19.3,2.5);

\draw (0,2.5) node[circle,fill=black!10] {$1$};
\draw (5,5) node[circle,fill=black!25] {$2$};
\draw (5,0) node[circle,fill=black!10] {$3$};
\draw (10,5) node[circle,fill=black!10] {$4$};
\draw (10,0) node[circle,fill=black!25] {$5$};
\draw (15,2.5) node[circle,fill=black!25,draw,very thick] {$6$};
\draw (20,2.5) node[circle,fill=black!10] {$7$};

\end{tikzpicture}
\end{center}
\caption{Acyclic network highlighting weak and strong middlemen}
\label{weakmm}
\end{figure}
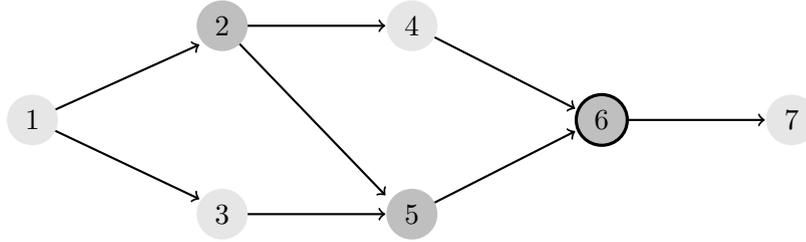

\begin{example} \label{identifyingmiddlemen}
Consider a weakly connected network $D$ on the node set $N=\{1,2,3,4,5,6,7\}$, which is depicted in Figure~\ref{weakmm}. This network represents a flow system from the left to the right. We easily determine that $\mathbf M (D)=\{2,5,6\}$, where nodes 2 and 5 are regular middlemen and node 6 is a strong middleman. In the graphical representation this is indicated by the shading of these nodes and the circling of node $6$.
\\
Consider node 2. Node 2 lies on all paths from node 1 to node 4, therefore $2 \in M_{1,4}(D)$ and $\mathcal{W}_{1,4}(D - 2) = \varnothing$. However, the reduced network $D - 2$ remains weakly connected, meaning that node 2 must be a regular middleman.
\\
An analogous argument could be made for node 5 because $5 \in M_{3,6} (D) \cap M_{3,7} (D)$ and, therefore, $\mathcal{W}_{3,6}(D - 5) = \mathcal{W}_{3,7}(D - 5) = \varnothing$. On the other hand, the reduced network $D - 5$ remains weakly connected.
\\
Finally, $6 \in M_{1,7}(D) \cap M_{2,7}(D) \cap M_{3,7}(D) \cap M_{4,7}(D) \cap M_{5,7}(D)$. Indeed, node 6 is the sole broker of all interaction with node 7. In particular, the network $D - 6$ is not weakly connected and consists of two components: $\{ 1,2,3,4,5 \}$ and $\{ 7 \}$.
\\
All other intermediaries are non-middlemen. Indeed, even the removal of either node 3 or 4 does not affect the connectivity in the network.\footnote{It is worth noting that if all arcs were reciprocated in network $D$ from Figure~\ref{weakmm} to form an undirected network, nodes 2 and 5 would no longer be middlemen. However, node 6 would still be a middleman.}
\end{example}

\noindent
Theorem~\ref{undirectedmiddlemen} below naturally follows from Definition~\ref{strongweakmiddlemen} and the properties illustrated in Example~\ref{identifyingmiddlemen}.

\begin{theorem} \label{undirectedmiddlemen}
Every middleman in a strongly connected network is a strong middleman.
\end{theorem}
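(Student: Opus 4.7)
The plan is to argue by contradiction. Suppose $h$ is a middleman in the strongly connected network $D$, so there exist distinct $i, j \in N \setminus \{h\}$ with $h \in M_{ij}(D)$, and suppose further, for contradiction, that $D - h$ is weakly connected. The goal is to exhibit an $ij$-path in $D - h$, violating the middleman property of $h$.

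First, because every $ij$-path in $D$ runs through $h$, one has $\mathcal{W}_{ij}(D - h) = \varnothing$; weak connectedness of $D - h$ applied to the pair $(i, j)$ therefore forces $\mathcal{W}_{ji}(D - h) \neq \varnothing$, so there is a $j$-to-$i$ path in $D - h$. Second, using strong connectedness of $D$, fix a specific $ij$-path $P$ in $D$. Because $h$ is an $ij$-middleman, $P$ must pass through $h$, so write $P = (i, \ldots, a, h, b, \ldots, j)$, where $a \in p_h(D)$ and $b \in s_h(D)$ are the immediate predecessor and successor of $h$ along $P$. Both the prefix and the suffix of $P$ avoid $h$, so $i$ reaches $a$ and $b$ reaches $j$ in the reduced network $D - h$.

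Third, apply the weak-connectedness hypothesis to the pair $(a, b)$ in $D - h$. If $a$ reaches $b$ in $D - h$, splicing such a path between the prefix of $P$ and its suffix yields an $ij$-walk in $D - h$ containing no occurrence of $h$, from which a genuine $ij$-path in $D - h$ can be extracted by the standard walk-to-path reduction; this contradicts $h \in M_{ij}(D)$. So weak connectedness forces the reverse conclusion, namely that $b$ reaches $a$ in $D - h$.

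The main obstacle lies in closing out this latter case, because $b \to \cdots \to a$ in $D - h$ only produces a cycle $b \to \cdots \to a \to h \to b$ in $D$, which is not immediately contradictory. My plan is to combine the $j$-to-$i$ path in $D - h$ already in hand with the prefix $i \to a$ of $P$ to deduce that $j$ reaches $a$ in $D - h$, and symmetrically combine with the suffix $b \to j$ to obtain that $b$ reaches $i$ in $D - h$. A further application of weak connectedness to an appropriate pair, together with strong connectedness of $D$, then reroutes these pieces through the cycle $b \to \cdots \to a$ to produce the forbidden $ij$-path in $D - h$. I expect the technically trickiest step to be verifying that this reroute can be carried out without revisiting $h$, which will require a careful choice of the underlying paths and a minimal-length argument to avoid duplicate visits.
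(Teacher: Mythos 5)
Your proposal stalls exactly where you predict it will, and the obstacle is not a technicality: the final ``reroute'' step in the case where $b$ reaches $a$ but $a$ does not reach $b$ in $D-h$ is only announced, never executed, and it cannot be executed because the statement is false for genuinely directed networks. Take the directed $3$-cycle $D=\{12,23,31\}$ on $N=\{1,2,3\}$. It is strongly connected, and node $2$ is a $13$-middleman since the unique path from $1$ to $3$ passes through $2$; yet $D-2=\{31\}$ is weakly connected, so $2$ is a regular middleman, not a strong one. Every intermediate fact you derive holds in this example --- $\mathcal{W}_{ji}(D-h)\neq\varnothing$, the path $P=(1,2,3)$ gives $a=i$ and $b=j$, and $b$ reaches $a$ in $D-h$ --- but there is no $ij$-path in $D-h$ to be produced, so no choice of ``appropriate pair,'' no splicing through the $b\to\cdots\to a$ cycle, and no minimal-length argument will close the case. (The paper's own Example~\ref{starcycle} already observes that every node of a directed cycle is a middleman, which contradicts the theorem as stated.)

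For comparison, the paper's proof evades your obstacle by asserting $W_{ij}(D)=W_{ji}(D)$, so that $h\in\cap\mathcal{W}_{ij}(D)$ implies $h\in\cap\mathcal{W}_{ji}(D)$; that identity holds only for undirected networks, which is precisely the hypothesis under which the result (and the corollary that follows it) is actually valid. Your case analysis is the more careful one: it isolates the single configuration --- $h$ on every $ij$-path while $\mathcal{W}_{ji}(D-h)\neq\varnothing$ --- that the directed setting permits and the undirected setting forbids. If you strengthen the hypothesis to $D$ undirected, or more generally to $h\in M_{ij}(D)\cap M_{ji}(D)$, then your first paragraph already finishes the proof: $\mathcal{W}_{ij}(D-h)=\mathcal{W}_{ji}(D-h)=\varnothing$ places $i$ and $j$ in different components of $D-h$, so $h$ is a strong middleman. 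As written, though, the proposal contains an unfillable gap.
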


\begin{proof}
Consider a strongly connected network $(N,D)$ where $\# N =n \geqslant 3$. According to Definition~\ref{middleman}, a node $h \in N$ is a middleman if it rests on all paths between at least two other nodes, say $i$ and $j$. Since $W_{ij}(D) = W_{ji}(D)$, the property that $h \in \cap \mathcal{W}_{ij}(D)$ implies that $h \in \cap \mathcal{W}_{ji}(D)$.
\\
Thus, in $D - h$ all paths from node $j$ to node $i$ as well as from node $i$ to node $j$ are disconnected. This in turn implies that $i$ and $j$ cannot be connected through any path in any direction and $D - h$ must contain at least two components, separating $i$ and $j$ in different components. This implies that $h$ is actually a strong middleman in $D$.
\end{proof}

\medskip\noindent
The fact that every weakly connected undirected network $D$ on node set $N$ is actually strongly connected gives rise to the following corollary to Theorem \ref{undirectedmiddlemen}:

\begin{corollary}
	Every middleman in an undirected network is a strong middleman.
\end{corollary}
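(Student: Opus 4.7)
The plan is to deduce the corollary directly from Theorem~\ref{undirectedmiddlemen} by verifying that the hypothesis of that theorem is automatically satisfied in the undirected setting. Since Definition~\ref{strongweakmiddlemen} presupposes a weakly connected host network, the statement of the corollary should be read under that same standing assumption; the task then reduces to showing that a weakly connected undirected network is in fact strongly connected.

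First, I would fix a weakly connected undirected network $D$ on node set $N$ and pick an arbitrary middleman $h \in \mathbf{M}(D)$. The key observation is that undirectedness, $ij \in D \Leftrightarrow ji \in D$, implies that for every $ij$-walk $(i_1,\ldots,i_m)$ in $D$ the reversed sequence $(i_m,\ldots,i_1)$ is a valid $ji$-walk, and the same reversal works for paths. Consequently $\mathcal{W}_{ij}(D)\neq\varnothing$ if and only if $\mathcal{W}_{ji}(D)\neq\varnothing$. Combining this with weak connectedness (for every pair $i,j$, at least one of the two directions is connected), I would conclude that both directions are always connected, i.e., $D$ is strongly connected.

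Once $D$ is established as strongly connected, Theorem~\ref{undirectedmiddlemen} applies verbatim to the chosen middleman $h$, yielding that $h$ is a strong middleman in $D$. Since $h$ was arbitrary, every middleman in $D$ is a strong middleman, which is the claim.

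I do not foresee any real obstacle here; the only subtle point is making explicit that Definition~\ref{strongweakmiddlemen}'s distinction between strong and regular middlemen presumes a weakly connected ambient network, so the corollary's intended scope matches that of Theorem~\ref{undirectedmiddlemen} once the weak/strong connectivity equivalence for undirected networks is noted. The argument is essentially a one-line reduction, so I would keep the write-up short and emphasise the reversal-of-walks observation that collapses weak and strong connectedness in the undirected case.
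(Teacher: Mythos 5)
Your argument is correct and matches the paper's own reasoning exactly: the paper derives the corollary from Theorem~\ref{undirectedmiddlemen} via the one-line observation that a weakly connected undirected network is strongly connected, which is precisely your reversal-of-paths reduction. Your write-up merely makes explicit the details (path reversal, the scope of Definition~\ref{strongweakmiddlemen}) that the paper leaves implicit.
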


\noindent Regular middlemen only exist due to the distinction between weakly connected and strongly connected nodes. The distinction collapses in an equivalent undirected network as all nodes are effectively strongly connected. This implies the following.

\begin{corollary} \label{corundirectedmiddleman}
Let $D$ be some network on node set $N$ with $n \geqslant 3$. If there exists at least one regular middleman in $D$, then there exist two distinct nodes $i,j \in N$ with $i \neq j$ such that $\mathcal W_{ij} (D) \neq \varnothing$ and $\mathcal W_{ji} (D) = \varnothing$.
\end{corollary}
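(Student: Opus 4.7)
The plan is to derive the corollary by chaining two facts: the existence of a regular middleman forces $D$ to fail strong connectedness, and any weakly connected network that is not strongly connected must contain an asymmetric pair of the kind described. Throughout, I take the standing assumption of Definition~\ref{strongweakmiddlemen} that $D$ is weakly connected on $n \geqslant 3$ nodes, which is implicit in speaking of a regular middleman at all.

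First I would fix a regular middleman $h \in \mathbf{M}(D)$ and apply the contrapositive of Theorem~\ref{undirectedmiddlemen}. That theorem says every middleman in a strongly connected network is a strong middleman; since $h$ is a middleman that is not strong (by Definition~\ref{strongweakmiddlemen}(ii), $D - h$ remains weakly connected, so $h$ cannot satisfy (i)), $D$ itself cannot be strongly connected. Thus $D$ is weakly connected but not strongly connected.

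Next I would unpack what this combination means at the level of path-sets. Weak connectedness gives, for every pair $i,j \in N$ with $i \neq j$, that at least one of $\mathcal{W}_{ij}(D)$ and $\mathcal{W}_{ji}(D)$ is non-empty. The failure of strong connectedness produces a witness pair $i_0, j_0$ for which at least one of these two collections is empty. At that witness pair, exactly one of $\mathcal{W}_{i_0 j_0}(D)$ and $\mathcal{W}_{j_0 i_0}(D)$ is empty and the other is non-empty. Relabelling so the non-empty direction runs from $i$ to $j$ yields the required pair.

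There is no real obstacle here; the proof is essentially a contrapositive application of Theorem~\ref{undirectedmiddlemen} followed by a routine unpacking of the definitions of weak and strong connectedness. The only point requiring care is to record that ``regular middleman'' already carries weak connectedness as a background hypothesis, so that the weak-connectedness half of the path-set dichotomy is genuinely available when we extract $i$ and $j$.
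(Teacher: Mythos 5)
Your proof is correct and follows exactly the route the paper intends: the corollary is stated there without proof as an immediate consequence of Theorem~\ref{undirectedmiddlemen}, and your argument --- the contrapositive of that theorem combined with unpacking the dichotomy between weak and strong connectedness to extract the asymmetric pair --- is precisely the omitted derivation. Your observation that weak connectedness is carried implicitly by Definition~\ref{strongweakmiddlemen} is the right way to secure the non-empty half of the path-set dichotomy.
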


\noindent
The distinction between regular and strong middlemen is natural and enhances our understanding of the functionality of directed versus undirected networks. We enhance this understanding further in the following discussion that allows the measurement of middleman control, in which it is shown that regular middlemen can actually be more powerful than strong middlemen.

\subsection{Contestation of intermediaries in networks}

Next we examine the relationship between critical nodes and competition or ``contestation'' in networks. Based on the model of network competition in \citet{GillesDiamantaris2013}, such contestation rests on the ability to use alternative pathways to reach other nodes in the network and circumvent a particular intermediary. Hence, it refers to the ability to prevent that intermediary from mediating the information flow to other nodes.

 A node is contested by other nodes if this group of nodes can cover all connections facilitated by that node. Formally, contestation is modelled as the ability of an alternative group of nodes to service the \emph{coverage} of an intermediary, given by the product of that intermediary's predecessor and successor set.

\begin{definition} \label{Contested}
Let $D$ be a network on node set $N=\{1, \ldots ,n\}$ and let $i \in M_D$ be some intermediary node in $D$.
\begin{abet}
\item The \textbf{coverage} of intermediary $i$ in the network $D$ is defined as all node pairs $(h,j)$ with $h \neq j$ that can use node $i$ as an intermediary in their interaction:
\begin{equation}
	\Gamma_i (D) = \left\{ \left. \, (h,j) \in P_i (D) \times S_i (D) \, \right| \, h \neq j \, \right\}
\end{equation}
and the \textbf{extended coverage} of node $i$ in network $D$ is defined by
\begin{equation}
	\overline{\Gamma}_i (D) = \overline{P}_i (D) \times \overline{S}_i (D)
\end{equation}
\item Intermediary $i$ is \textbf{contested} by node set $C \subset N$ in the network $D$ if $i \notin C$ and it holds that
\begin{equation} \label{Group Contested}
\Gamma_i (D) \subseteq \bigcup_{j \in C} \overline{\Gamma}_j (D-i)
\end{equation}
The class of all contesting node sets of intermediary $i$ is denoted by $\mathcal{C}_i (D) \subset 2^N$.
\\
A \textbf{minimal} contesting node set is given by $C_{i}^{*}(D) \in \arg \min \{ \# C \mid C \in \mathcal{C}_{i} (D) \, \}$.

\item Intermediary $i$ is \textbf{directly contested} by a node $j \neq i$ in network $D$ if the singleton node set $\{ j \}$ contests $i$ in $D$, i.e., $\Gamma_i (D) \subseteq \overline{\Gamma}_j (D-i)$.

\item Intermediary $i$ is \textbf{uncontested} if there is no node set that contests $i$.
\end{abet}
\end{definition}

\noindent
The next proposition states in essence the very nature of contestation in a network is that a node can completely take over the functionality of the contested intermediary. Thus, intermediary $i$ is directly contested by node $j$ only when all of $i$'s predecessor set can be connected to $i$'s successor set either through or from node $j$ when $i$ is removed from the network.

\begin{proposition}
	An intermediary node $i \in M_D$ is directly contested by node $j \in N$ in network $D$ if and only if
	\begin{equation} \label{Directly Contested}
	    P_{i}(D) \subseteq P_{j} (D - i) \cup \{j\} \quad \mbox{as well as} \quad S_{i} (D) \subseteq S_{j} (D - i) \cup \{ j \} \, .
	\end{equation}
\end{proposition}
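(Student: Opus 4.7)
The plan is to treat this as an unfolding of definitions, exploiting that $\overline{\Gamma}_j(D-i) = \overline{P}_j(D-i) \times \overline{S}_j(D-i) = (P_j(D-i) \cup \{j\}) \times (S_j(D-i) \cup \{j\})$ is by definition a genuine Cartesian product, whereas $\Gamma_i(D)$ sits inside $P_i(D) \times S_i(D)$ with only the diagonal removed. Hence the containment $\Gamma_i(D) \subseteq \overline{\Gamma}_j(D-i)$ is essentially a statement about the projections of $\Gamma_i(D)$ onto its two coordinates, which is exactly what the displayed coordinate-wise inclusions \eqref{Directly Contested} describe.

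For sufficiency ($\Leftarrow$), I would assume the two inclusions and pick an arbitrary $(h,k) \in \Gamma_i(D)$; by definition $h \in P_i(D)$ and $k \in S_i(D)$, so the hypotheses place $h \in \overline{P}_j(D-i)$ and $k \in \overline{S}_j(D-i)$, hence $(h,k) \in \overline{\Gamma}_j(D-i)$, which is precisely direct contestation. This direction is essentially automatic.

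For necessity ($\Rightarrow$), I would fix $h \in P_i(D)$ and aim to show $h \in \overline{P}_j(D-i)$. The idea is to find a witness $k \in S_i(D)$ with $k \neq h$, so that $(h,k) \in \Gamma_i(D) \subseteq \overline{\Gamma}_j(D-i)$; projecting onto the first coordinate then yields $h \in \overline{P}_j(D-i) = P_j(D-i) \cup \{j\}$, as required. A symmetric argument, fixing $k \in S_i(D)$ and pairing it with some $h \in P_i(D) \setminus \{k\}$, delivers the successor inclusion. The existence of a suitable partner in each case is powered by the intermediary hypothesis $i \in M_D$, which guarantees a pair $(h', k') \in p_i(D) \times s_i(D)$ with $h' \neq k'$, so $\Gamma_i(D) \neq \varnothing$ and neither $P_i(D)$ nor $S_i(D)$ is the same singleton.

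The step I expect to require the most care is precisely this choice of a distinct partner: if some $h \in P_i(D)$ were to coincide with the unique element of $S_i(D)$ (i.e.\ $h$ sits on a cycle back through $i$), the pair $(h,h)$ is excluded from $\Gamma_i(D)$ by the requirement $h \neq k$ in Definition~\ref{Contested}(a), so no pair in $\Gamma_i(D)$ involves $h$ in its first coordinate. The cleanest way to dispatch this is to invoke the intermediary condition to produce an alternative predecessor or successor and route the argument through that pair; in all nondegenerate configurations the argument is immediate, and this edge case is the only place where the bookkeeping is subtle.
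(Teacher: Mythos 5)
Your sufficiency direction is complete and correct, and for what it is worth the paper states this proposition without any proof at all, so there is no official argument to compare against. The necessity direction, however, contains a genuine gap at exactly the spot you flagged, and the flag is more serious than you believe: the edge case is not ``subtle bookkeeping'' that the intermediary hypothesis can dispatch --- it is a counterexample to the proposition as stated. Take $N=\{1,2,3,4\}$ with $D=\{12,\,23,\,32,\,14,\,43\}$ and $i=2$, $j=4$. Node $2$ is an intermediary ($1\in p_2$, $3\in s_2$, $1\neq 3$), with $P_2(D)=\{1,3,4\}$ and $S_2(D)=\{3\}$, so $\Gamma_2(D)=\{(1,3),(4,3)\}$: the pair $(3,3)$ is struck out by the requirement $h\neq j$ in Definition~\ref{Contested}(a), so no element of $\Gamma_2(D)$ has $3$ in its first coordinate. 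Since $D-2=\{14,43\}$ gives $\overline{\Gamma}_4(D-2)=\{1,4\}\times\{3,4\}\supseteq\Gamma_2(D)$, node $4$ directly contests node $2$; yet $P_2(D)=\{1,3,4\}\not\subseteq P_4(D-2)\cup\{4\}=\{1,4\}$, because $3\in P_2(D)$. So direct contestation does \emph{not} imply the displayed predecessor inclusion.

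Your proposed repair --- ``invoke the intermediary condition to produce an alternative predecessor or successor and route the argument through that pair'' --- cannot work, because to certify $h\in\overline{P}_j(D-i)$ you need \emph{some} pair of the form $(h,k)\in\Gamma_i(D)$, which forces $k\in S_i(D)\setminus\{h\}$; when $S_i(D)=\{h\}$ no such $k$ exists, and no alternative pair $(h',k')$ with $h'\neq h$ says anything about $h$. The intermediary condition only guarantees a distinct pair in $p_i(D)\times s_i(D)$, which is exactly what my example satisfies. The necessity direction is valid precisely when every $h\in P_i(D)$ admits a partner in $S_i(D)\setminus\{h\}$ and every $k\in S_i(D)$ admits a partner in $P_i(D)\setminus\{k\}$ --- automatic if $P_i(D)\cap S_i(D)=\varnothing$ (e.g.\ in acyclic networks) or if both $\#P_i(D)\geqslant 2$ and $\#S_i(D)\geqslant 2$ --- so an honest writeup should either add such a hypothesis or observe that the equivalence fails for intermediaries sitting on a cycle with a singleton successor or predecessor set.
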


\noindent
The exact same intuition is used with respect to contestation by a group of nodes as shown in the next example.

\begin{example} \label{Simple Contestability}
We consider a network to illustrate the notion of contestability. Consider directed network $D$ on node set $N = \{1,2,3,4,5,6,7\}$ shown in Figure~\ref{weakmm} on page \pageref{weakmm}. Table 1 below provides the predecessor and successor sets of all nodes in the network.

\begin{table}[h]
\begin{center}
\begin{tabular}{|c|cc|}
\toprule
Node & Predecessor Set  & Successor Set                     \\
\midrule
1    & $P_{1}(D)=\varnothing$          & $S_{1}(D)=\{2,3,4,5,6,7\}$        \\
2    & $P_{2}(D)=\{1\}$                & $S_{2}(D)=\{4,5,6,7\}$            \\
3    & $P_{3}(D)=\{1\}$                & $S_{3}(D)=\{5,6,7\}$              \\
4    & $P_{4}(D)=\{1,2\}$              & $S_{4}(D)=\{6,7\}$                \\
5    & $P_{5}(D)=\{1,2,3\}$            & $S_{5}(D)=\{6,7\}$                \\
6    & $P_{6}(D)=\{1,2,3,4,5\}$        & $S_{6}(D)=\{7\}$                  \\
7    & $P_{7}(D)=\{1,2,3,4,5,6\}$      & $S_{7}(D)=\varnothing$  \\
\bottomrule
\end{tabular}

\caption{Predecessor and successor sets of nodes in Figure~\ref{weakmm}}
\label{network1stats}
\end{center}
\end{table}

\noindent
Using this information we deduce that intermediaries 3 and 4 are contested, whereas intermediaries 2, 5, and 6 are uncontested.
\\
Here, node 3 is directly contested by node 2: Indeed, $P_{3}(D) = \{1\} \equiv P_{2}(D)$ and $S_{3}(D) = \{5,6,7\} \subset S_{2}(D) = \{4,5,6,7\}$. It is also true that $P_{3}(D) \subseteq P_{2}(D - 3) \cup \{2\} = \{ 1,2 \}$ and $S_{3}(D) \subseteq S_{2}(D - 3) \cup \{2\} = \{ 2,4,5,6,7 \}$.
\\
This case introduces what can be denoted as \textit{asymmetric contestation}, meaning that although node $i$ contests node $j$, it may not be true that node $j$ contests node $i$. Here, node 3 directly contests node 2, although node 2 is not directly contested by node 3. Only in rare cases will there exist \textit{symmetric contestation} where node $i$ contests node $j$ and node $j$ contests node $i$.
\end{example}

\noindent
The next example highlights more complex forms of contestation where a highly connected node contests two others, while these two nodes in turn contest the highly connected node.

\begin{example} \label{Group Contestability}
Consider directed network $D$ on node set $N = \{1,2,3,4,5,6\}$, shown in Figure~\ref{Complex Contestability}, where $M(D) = \varnothing$. Here, node $4$ connects nodes $1$ and $2$ to nodes $5$ and $6$, and therefore directly contests node $2$ while not being directly contested by any other individual node. However, node $4$ is not a middleman and indeed the set $ C= \{ 2,3 \}$ contests node $4$.

\begin{figure}[h]
\begin{center}
\begin{tikzpicture}[scale=0.5]
\draw[thick, ->] (5,13) -- (9.5,5.6);
\draw[thick, ->] (5,13) -- (5,2.9);
\draw[thick, ->] (10,10) -- (0.8,10);
\draw[thick, ->] (10,10) -- (0.8,5.5);
\draw[thick, ->] (10,10) -- (5.6,2.6);
\draw[thick, ->] (10,5) -- (0.8,5);
\draw[thick, ->] (10,5) -- (0.8,9.7);
\draw[thick, ->] (5,2) -- (0.7,4.5);
\draw[thick, ->] (5,2) -- (0.5,9.3);

\draw (5,13) node[circle,fill=black!10] {$1$};
\draw (10,10) node[circle,fill=black!25] {$2$};
\draw (10,5) node[circle,fill=black!25] {$3$};
\draw (5,2) node[circle,fill=black!25,draw,very thick] {$4$};
\draw (0,5) node[circle,fill=black!10] {$5$};
\draw (0,10) node[circle,fill=black!10] {$6$};

\end{tikzpicture}
\caption{Network $D$ in Example \ref{Group Contestability} where $C = \{ 2,3 \}$ contests node $4$}
\label{Complex Contestability}
\end{center}
\end{figure}
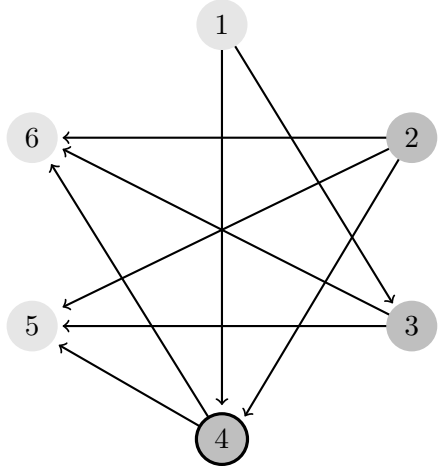

\noindent
Clearly, the extended coverage of nodes $2$ and $3$ encapsulates the coverage of node $4$. Therefore, although nodes $2$ and $3$ do not contest node $4$ individually, the node set $C = \{ 2,3 \}$ contests node $4$. Indeed, the condition for group contestation holds:
\begin{equation}
P_{4}(D) \times S_{4}(D) \subset \left( \, \overline{P}_{2}(D - 4) \times \overline{S}_{2}(D - 4) \cup \overline{P}_{3}(D - 4) \times \overline{S}_{3}(D - 4) \, \right).
\end{equation}
If node $4$ is removed from the network, its function can be fully replaced by the combination of nodes $2$ and $3$ and therefore all other nodes that were connected can still be connected in the same way.
\end{example}

\noindent
Example~\ref{Group Contestability} highlights the requirement for the extended coverage $\overline{\Gamma}_{j}(D - i)$ used in the definition of contestation instead of the coverage, $\Gamma_{j}(D - i)$. Indeed, consider the network in Figure~\ref{Complex Contestability}. As noted, there exist no middlemen and all intermediaries are contested given the definition above. With a more restricted conception based on $\Gamma_{j}(D - i)$, node $4$ would neither be contested nor a middleman. Definition~\ref{Contested} adjusts for predecessors of the given node, $i$, that can connect to the successors of $i$, thereby fulfilling the same function and, thus, contesting $i$.

Examples~\ref{Simple Contestability} and~\ref{Group Contestability} give an indication that if an intermediary is contested, it cannot be a middleman. For example, in Figure~\ref{weakmm} agent $3$ is a non-middleman because his function is directly contested by the presence of node $2$, and node $2$ is a middleman because its function is not contested by any other node in the network. Our main result states a duality between contestation and the existence of middlemen.
\begin{theorem} \label{duality} \textbf{\emph{(Duality of middlemen and contestability)}} \\
Consider a network $D$ on node set $N$ with $n \geqslant 3$. Then:
\begin{abet}
	\item Every middleman $i \in \mathbf M (D)$ is an uncontested intermediary in $D$.
	\item If an intermediary $i \in M_D$ is uncontested in $D$, then $i$ is a middleman, i.e., $i \in \mathbf M (D)$.
\end{abet}
\end{theorem}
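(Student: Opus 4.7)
The plan is to prove (a) by contradiction and (b) by contrapositive. Both directions rest on the same elementary observation: whenever $(j,k) \in \Gamma_i(D)$, a $ji$-walk (which exists because $j \in P_i(D)$) and an $ik$-walk (which exists because $k \in S_i(D)$) can be concatenated into a $jk$-walk in $D$ which, since $j \neq k$, contains a $jk$-path in $D$ after deleting cycles.

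For part (a), I would fix a middleman $i \in \mathbf{M}(D)$ with witnesses $j \neq k$ in $N \setminus \{i\}$ such that $\mathcal{W}_{jk}(D) \neq \varnothing$ and $i \in \cap\, \mathcal{W}_{jk}(D)$. Splitting any $jk$-path through $i$ into a $ji$-sub-path and an $ik$-sub-path shows $j \in P_i(D)$ and $k \in S_i(D)$, hence $(j,k) \in \Gamma_i(D)$. Assuming toward a contradiction that some $C \subset N \setminus \{i\}$ contests $i$, there is $c \in C$ with $(j,k) \in \overline{\Gamma}_c(D-i)$. I would then case-split on $c = j$, $c = k$, or $c \notin \{j,k\}$: in the first two cases the containment $k \in \overline{S}_j(D-i)$ or $j \in \overline{P}_k(D-i)$ directly produces a $jk$-path in $D-i$; in the third case, the containments $j \in P_c(D-i)$ and $k \in S_c(D-i)$ concatenate at $c$ into a $jk$-walk in $D-i$ and hence a $jk$-path in $D-i$. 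Each outcome contradicts $i \in \cap\, \mathcal{W}_{jk}(D)$.

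For part (b), I would argue the contrapositive: given an intermediary $i \in M_D$ that is not a middleman, I propose the explicit contesting set $C = P_i(D)$. This set excludes $i$ (since $i \notin P_i(D)$) and is nonempty because the intermediary condition guarantees $p_i(D) \subset P_i(D) \neq \varnothing$. For an arbitrary $(j,k) \in \Gamma_i(D)$ I would pick $c = j \in C$: membership $j \in \overline{P}_j(D-i)$ is immediate, and for $k \in \overline{S}_j(D-i)$ the concatenation observation produces a $jk$-path in $D$, so since $i$ is not a $jk$-middleman there must exist a $jk$-path in $D$ avoiding $i$, which gives $k \in S_j(D-i) \subset \overline{S}_j(D-i)$. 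Thus $C$ covers $\Gamma_i(D)$ through extended coverages and contests $i$, which is the required contradiction.

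The main obstacle I anticipate is the three-way case analysis in part (a): the extended-coverage flexibility of $\overline{\Gamma}_c$ over $\Gamma_c$, which allows $c$ itself to stand in as predecessor or successor (as anticipated by Example~\ref{Group Contestability}), has to be handled carefully so that no case escapes the conclusion of producing a $jk$-path in $D-i$. Part (b) is by contrast essentially a construction with one verification. A minor care concerns the paper's walk/path conventions, which I use charitably so that any walk between distinct endpoints yields a path after cycle removal, consistent with the stated inclusion $s_i(D) \subset S_i(D)$.
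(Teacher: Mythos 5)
Your proof is correct, and it rests on the same underlying duality as the paper's argument while being substantially more explicit at the two points where the paper's own proof is only a sketch. For part (a) the paper simply asserts that if $i$ is contested then ``all of $i$'s predecessors can be connected to all of $i$'s successors by a path that does not include $i$''; your three-way case analysis on whether the covering node $c$ equals $j$, equals $k$, or is a third node is exactly the bookkeeping needed to turn the membership $(j,k) \in \overline{\Gamma}_c(D-i)$ into an actual $jk$-path in $D-i$, and it correctly handles the extended-coverage convention that lets $c$ itself stand in as an endpoint. For part (b) the paper argues directly that an uncontested intermediary must lie on some otherwise-uncovered path, but it never exhibits the node set that would contest a non-middleman; your contrapositive with the explicit witness $C = P_i(D)$ supplies precisely this missing construction, and the verification (for $(j,k) \in \Gamma_i(D)$ take $c = j$, concatenate a $ji$-path with an $ik$-path to get a $jk$-path in $D$, then use that $i$ is not a $jk$-middleman to find one avoiding $i$) is sound. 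One small point worth recording explicitly: the intermediary condition $d_i \geqslant 2$ guarantees $\Gamma_i(D) \neq \varnothing$, so your contesting set is doing genuine work and the degenerate case of an empty coverage never arises.
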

\begin{proof}
Let $D$ be a network on node set $N$ with $n \geqslant 3$.

\smallskip\noindent
\emph{Proof of (a):}
The condition for contestability stated in equation (\ref{Group Contested}) on page \pageref{Group Contested} contends that a node $h \in N$ is contested in network $D$ if its coverage, determined by the nodes it intermediates, is a subset of the coverage and the reach of the nodes in $C_{h} \subset N \setminus \{ h \}$.
\\
Now consider an intermediary $i \in M_D$ that is contested by a set of agents, $C \subset N \setminus \{ i \}$. Since $i$ is contested, it must be true that all of $i$'s predecessors can be connected to all of $i$'s successors by a path that does not include $i$. Therefore, $i$ cannot be a middleman. This implies the assertion that every middleman is uncontested.

\smallskip\noindent
\emph{Proof of (b):}
Consider an intermediary $h \in M_D$ who is uncontested in the network $D$. Then $h$'s coverage is not a subset of the coverage of any set of nodes plus the respective reach of each of these nodes. This implies that $h$ itself has to rest on at least one path that no other nodes in the network rest on when $h$ is removed from the network. Hence, in the network $D$ there exists at least one pair of nodes, say $i$ to $j$, with $h \in \cap \mathcal{W}_{ij} (D)$ and $\mathcal{W}_{ij}(D - \{h\}) = \varnothing$. This implies that $h$ is actually a middleman concerning the paths from $i$ to $j$.
\end{proof}

\medskip\noindent
From Theorem \ref{duality}, all middlemen are uncontested; if a node is contested, then all of its intermediation functions or coverage can be replaced by the coverage of other nodes. From this, it is understood that a middleman is an intermediary that has a unique function and is, in some way, more effective than non-middlemen with respect to their connectivity and thus coverage in the network.

\section{Measuring middleman power}
\label{sec:middlemanPower}

A middleman occupies a critical position in a network since its removal disconnects at least two or more other nodes in the network and, in the most extreme case, might separate the network into multiple components. Therefore, it seems logical to ask how we can measure this power. After examining established measures, we propose a measure of middleman power based on the disconnections that emerge when a middleman is removed from the network.

\subsection{Middlemen and betweenness centrality}

We first examine whether betweenness centrality could be a tool to assess middleman power. \emph{Betweenness centrality} was proposed independently by \citet{Anthonisse1971} for edges and rephrased by \citet{Freeman1977} for nodes in undirected networks. \citet{White1994} proposed an extension to directed networks.

This measure seems specifically relevant since it explicitly considers the role of a node in connecting other nodes in the network. It may be expected that the betweenness centrality score of a node provides an indication of what nodes are middlemen by having a greater betweenness centrality than non-middlemen in the network.

To define the betweenness centrality measure, let $\pi(hj)$ be the number of shortest paths---or \emph{geodesics}---from node $h$ to node $j$. Furthermore, let $\pi_{i}(kj)$ be the number of geodesics that pass through node $i$. Betweenness centrality is now defined as

\begin{equation} \label{betweennesscentrality}
BC_{i}(D) = \sum_{h,j \neq i \colon \pi(hj) \neq 0} \frac{\pi_{i}(hj)}{\pi(hj)}
\end{equation}

Equation (\ref{betweennesscentrality}) indicates that a middleman $i$ between nodes $h$ and $j$ would always have a high betweenness centrality since by Definition~\ref{middleman} a middleman is on all paths between these two nodes. In particular, $\pi_i (hj) = \pi (hj)$. However, the formulation of betweenness centrality $BC$ only considers geodesics. As the next example illustrates, the betweenness centrality of non-middlemen may even surpass that of middlemen.

\begin{figure}[h]
\begin{center}
\begin{tikzpicture}[scale=0.5]
\draw[thick, ->] (0,10) -- (4.3,10);
\draw[thick, ->] (0,0) -- (4.3,0);
\draw[thick, ->] (0,5) -- (4.3,5);

\draw[thick, ->] (5,10) -- (9.2,8);
\draw[thick, ->] (5,10) -- (9.2,3);

\draw[thick, ->] (5,5) -- (9.2,7.3);
\draw[thick, ->] (5,5) -- (9.1,2.5);

\draw[thick, ->] (5,0) -- (9.5,6.8);
\draw[thick, ->] (5,0) -- (9.2,2);

\draw[thick, ->] (10,7.5) -- (14,7.5);
\draw[thick, ->] (10,2.5) -- (14,2.5);

\draw[thick, ->] (10,7.5) -- (14,2.7);
\draw[thick, ->] (10,2.5) -- (14,7.4);

\draw (0,0) node[circle,fill=black!10] {$3$};
\draw (0,5) node[circle,fill=black!10] {$2$};
\draw (0,10) node[circle,fill=black!10] {$1$};
\draw (5,0) node[circle,fill=black!25,draw,thick] {$6$};
\draw (5,5) node[circle,fill=black!25,draw,thick] {$5$};
\draw (5,10) node[circle,fill=black!25,draw,thick] {$4$};
\draw (10,7.5) node[circle,fill=black!10] {$7$};
\draw (10,2.5) node[circle,fill=black!10] {$8$};
\draw (15,7.5) node[circle,fill=black!10] {$9$};
\draw (15,2.5) node[circle,fill=black!10] {$10$};

\end{tikzpicture}
\end{center}
\caption[Differentiating betweenness and middlemen]{The network $D$ considered in Example~\ref{ex:BC}}
\label{mmnmm}
\end{figure}
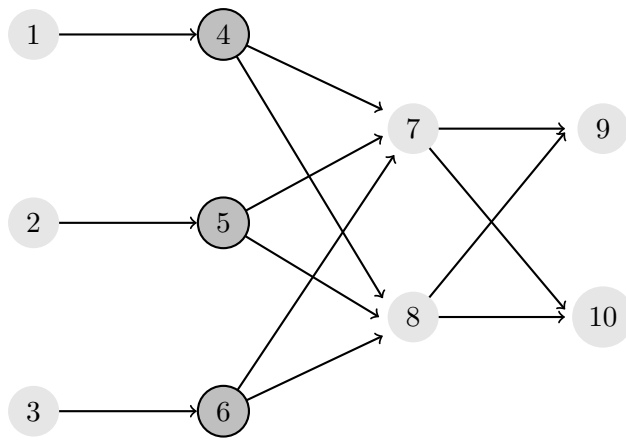

\begin{example} \label{ex:BC}
Consider the acyclic network $D$ depicted in Figure~\ref{mmnmm}. Here, as indicated in the graph, $\mathbf{M} (D) = \{ 4, 5, 6 \}$, while nodes $7$ and $8$ are contested intermediaries. All middlemen have the same non-normalised betweenness centrality measures due to their equivalent positions: $BC_{4}(D) = BC_{5}(D) = BC_{6}(D) = 4$. However, both contested intermediaries have larger non-normalised betweenness centrality scores: $BC_{7}(D)=BC_{8}(D)=6$.
\\
In the underlying undirected network, $U$, where all arcs in $D$ are reciprocated, the non-middlemen still have a higher betweenness centrality than middlemen: $BC_{4}(U)=BC_{5}(U)=BC_{6}(U)=16.4$ and $BC_{7}(U)=BC_{8}(U)=25$. Below, Table 2 provides a comparison of common centrality measures all of which indicate that nodes $7$ and $8$ are more central, therefore underrating the nodes with powerful middleman properties.
\end{example}

\begin{table}[h]
\begin{center}
\begin{tabular}{|c|cccccc|}
\toprule
Node & Degree 	& PageRank	& Betweenness 	& Closeness 	& Bonacich 	& $\beta$-Measure \\
\midrule
1    & 1    	& 0.112 	& 0.000    		& 0.360  		& 0.328 	& 0.333\\
2    & 1    	& 0.112 	& 0.000    		& 0.360  		& 0.328 	& 0.333\\
3    & 1    	& 0.112 	& 0.000    		& 0.360  		& 0.328 	& 0.333\\
4    & 3    	& 0.330 	& 0.456    		& 0.474  		& 1.047 	& 1.400\\
5    & 3    	& 0.330 	& 0.456    		& 0.474  		& 1.047 	& 1.400\\
6    & 3    	& 0.330 	& 0.456    		& 0.474  		& 1.047 	& 1.400\\
7    & 4    	& 0.480 	& 0.694    		& 0.692  		& 1.565 	& 2.000\\
8    & 4    	& 0.480 	& 0.694    		& 0.692  		& 1.565 	& 2.000\\
9    & 2    	& 0.297 	& 0.011    		& 0.474  		& 0.863 	& 0.400\\
10   & 2    	& 0.297 	& 0.011    		& 0.474  		& 0.863 	& 0.400\\
\bottomrule
\end{tabular}
\caption{Centrality results for the undirected network $U$}
\end{center}
\end{table}

\noindent
The identified deficiency of betweenness centrality to verify and measure the control exercised by middlemen, highlighted in Figure \ref{mmnmm} and Table 2, extends to other less common centrality measures. Indeed, no standard measure for undirected networks specifically identifies and highlights middlemen, instead the measures tend to over-inflate the power and importance of contested intermediaries in networks.

The potential high betweenness centrality of non-middlemen, for example, follows from the underlying assumptions of the measure. Indeed, only geodesics are counted between two given nodes, assuming that these geodesic paths have equal weight. The combination of these assumptions implies that the betweenness centrality measure does not necessarily measure the ``power'' of a node in negotiating between two others.


\subsection{A middleman power measure}

The power or control of a middleman in a network $D$ can be measured by simply counting the number of node pairs $(i,j) \in N \times N$ such that $(i,j)$ is connected in $D$, while $(i,j)$ is not connected in $D-i$. This number is introduced as the \emph{brokerage} of that middleman.

We introduce a general counting method that identifies the brokerage of an arbitrary node, rather than just the middlemen in a network. Therefore, this counting method introduces a way to exactly identify the middlemen in the network in an algorithmic fashion.

\begin{definition}
	Let $D$ be a network on node set $N = \{1, \ldots ,n\}$ and let $i \in N$ be an arbitrary node. The \textbf{brokerage} of node $i$ is
	\begin{equation} \label{brokerage}
		b_{i}(D) = \sum_{j \neq i} \left[ \, \# S_{j}(D) - \# S_{j}(D - i) \, \right] - \#P_{i}(D) .
	\end{equation}
\end{definition}

\noindent
The successor set of a node contains all other nodes that can be reached by a path from that node. The first part of (\ref{brokerage}), $\sum_{j \neq i} \# S_{j}(D)$, counts the total number of successors of all $n$ nodes except for node $i$. Hence, it provides an indication of the total connectivity of the network as a whole except for the connections of node $i$ in $D$.

Using the same intuition, the second part of (\ref{brokerage}), $\sum_{j \neq i} \# S_{j}(D - i)$, refers to the total connectivity of the network when node $i$ is removed. We remark that $\sum_{j \neq i} \# S_{j}(D) > \sum_{j \neq i} \# S_{j}(D - i)$ if $d_{i}(D) \geqslant 1$, and $\sum_{j \neq i} \# S_{j}(D) = \sum_{j \neq i} \# S_{j}(D - i)$ if $d_{i} (D) = 0$.

Therefore, $\sum_{j \neq i} \left[ \, \# S_{j}(D) - \# S_{j}(D - i) \, \right]$ expresses the \emph{net connectivity differential} from the removal of node $i$ from the network $D$. The net connectivity differential captures two features: (1) The direct connectivity of node $i$ in terms of its successor and predecessor set;\footnote{Indeed, the larger the predecessor and successor sets of node $i$ the larger the differential will be regardless of whether $i$ is a middleman or not.} and (2) The lost connectivity to other nodes not including $i$.

To assess the impact of a middleman, we are only interested in the lost connectivity caused by its removal from the network. Therefore, we compensate the connectivity differential with the upward connectivity of $i$ in the network. Specifically, the predecessor set of node $i$ has to be removed from the connectivity differential, thus adjusting for the direct connectivity of node $i$, resulting in the formulation in (\ref{brokerage}).

In short, the brokerage of a node counts the number of third-party disconnections that occur due to the removal of a node; or in other words, counts the number of $ij$-middleman sets that a node is a member of.

If $b_{i}(D) = 0$ then the removal of $i$ from the network makes no change to the network's connectivity---compensating for the connectivity of $i$. Hence, all nodes that are connected by a path in $D$ can still be connected in $D - i$. Thus, compensating for their connection to $i$ in $D$, the number of successors of all $j$ nodes is the same in $D - i$ as in $D$.

On the other hand, if $b_{i}(D) \geqslant 1$, there exists at least one pair of connected nodes that are now not connected in $D - i$. As a consequence, $i$ must be a middleman.

\medskip\noindent
We normalise the brokerage of a node by calculating the total number of potential opportunities for brokerage in the network. Brokerage---and, therefore, middleman positions---can only emerge if a pair of nodes are a minimum distance of two or more away from each other. Intuitively, by calculating the indirect successors of all nodes in the network, the total number of brokerage opportunities can be derived.

The set of indirect successors of $i$ in $D$ is given by $S_{i}(D) \setminus s_{i}(D)$. Therefore, the number indirect successors for node $i$ is given by $\# S_{i}(D) - \# s_{i}(D)$. Given this, the maximal potential brokerage in $D$ is computed as:
\begin{equation} \label{normalisation}
B(D) = \sum_{i \in N} \left[ \# S_{i}(D) - \# s_{i}(D) \right] .
\end{equation}
Note that $B(D)=0$ for certain networks, including the empty and complete networks on $N$. This leads to the following definition of a normalised brokerage-based centrality measure.

\begin{definition} \label{middlemanpower}
Let $D$ be a network on node set $N$. The \textbf{middleman power measure} for $D$ is the function $\nu (D) \colon N \to \mathbb{R}_+$ with for every node $i \in N \colon$
\begin{equation} \label{mmpowerindex}
\nu_{i}(D) = \frac{b_{i}(D)}{ \max \{ B(D) , 1 \}} .
\end{equation}
\end{definition}

\noindent
A middleman has a network power of 1 if it brokers all potential opportunities in the network. This includes nodes at the centre of star networks. The next example explicitly computes the middleman power measure for an undirected star and a directed cycle.

\begin{example} \label{starcycle}
For an \emph{undirected} star network, $D^{\star}$, on node set $N = \{ 1, \ldots ,n \}$, where $n \geqslant 3$, it holds that $b_{i}(D^{\star}) = (n-1)(n-2)$ for the centre node and $b_{j}(D^{\star})=0$ for all other nodes. The potential total brokerage for an undirected star is computed as $B(D^{\star}) = (n-1)(n-2)$. Therefore, the middleman power of the centre node is
\begin{equation}
\nu_{i}(D^{\star}) = \frac{(n-1)(n-2)}{(n-1)(n-2)} = 1 .
\end{equation}

\noindent Next, consider a directed cycle $D^{\circ}$ on node set $N$. Each node has an in-degree of 1 and an out-degree of 1, implying that all nodes are intermediaries as well as middlemen. We now compute that $b_{1}(D^{\circ}) = \ldots = b_{n}(D^{\circ}) = \frac{(n-1)(n-2)}{2}$. The potential total brokerage is $B (D^{\circ}) = n(n-2)$, implying $\nu_i = \frac{n-1}{2n}$ for all $i \in N$ where $n \geqslant 3$.
\end{example}

\noindent
We derive several properties for the middleman power measure stated in (\ref{mmpowerindex}).

\begin{theorem} \label{middlemanpowert}
Let $D$ be a network on node set $N=\{1, \ldots ,n\}$.
\begin{abet}
\item For every node $i \in N \colon 0 \leqslant \nu_{i} (D) \leqslant 1$.
\item For every contested intermediary $i \in M_D \colon \nu_{i}(D) = 0$.
\item For every middleman $i \in \mathbf M (D) \colon \nu_i (D) >0$.
\end{abet}
\end{theorem}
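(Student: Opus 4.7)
The plan is to reformulate the brokerage $b_i(D)$ as a combinatorial count of ordered pairs of nodes disconnected by the removal of $i$, and then read off parts (a)--(c) from that reformulation, with part (b) invoking the duality established in Theorem~\ref{duality}.

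The first step is to observe that removing $i$ can only affect the successor set of a node that actually reaches $i$: if $j \notin P_i(D) \cup \{i\}$, then no $j$-rooted path uses $i$, so $S_j(D-i) = S_j(D)$. For $j \in P_i(D)$, the set $S_j(D) \setminus S_j(D-i)$ always contains $i$, together with precisely those $k \neq i$ for which every $jk$-path in $D$ passes through $i$, i.e., those $k$ with $i \in M_{jk}(D)$. Substituting into the definition of $b_i(D)$ and cancelling the $\#P_i(D)$ term yields the key identity
\begin{equation*}
b_i(D) = \#\bigl\{(j,k) \in N \times N \mid j \neq k,\ i \notin \{j,k\},\ i \in M_{jk}(D)\bigr\}.
\end{equation*}
This identity immediately gives the non-negativity in (a), namely $b_i(D) \geqslant 0$, and also part (c): if $i \in \mathbf{M}(D)$, then by Definition~\ref{middleman} at least one pair $(j,k)$ contributes to the count, so $b_i(D) \geqslant 1$ and $\nu_i(D) > 0$. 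Part (b) follows similarly: if the intermediary $i$ is contested, then by the contrapositive of Theorem~\ref{duality}(a) it is not a middleman, so no pair contributes and $b_i(D) = 0$.

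For the upper bound $\nu_i(D) \leqslant 1$, observe that any pair $(j,k)$ counted by $b_i(D)$ must satisfy $jk \notin D$, since otherwise the arc $j \to k$ would be a $jk$-path avoiding $i$; consequently $k$ is an indirect successor of $j$, i.e., $k \in S_j(D) \setminus s_j(D)$. Hence the pairs counted by $b_i(D)$ form a subset of the pairs counted by $B(D) = \sum_{j \in N} \#(S_j(D) \setminus s_j(D))$, giving $b_i(D) \leqslant B(D)$. When $B(D) \geqslant 1$ this yields $\nu_i(D) \leqslant 1$; when $B(D) = 0$, every reachable node is directly reachable so no middleman can exist, $b_i(D) = 0$, and $\nu_i(D) = 0$. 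The only delicate step in the whole argument is the reformulation identity for $b_i(D)$; once it is established, all three assertions follow by direct bookkeeping together with the duality theorem.
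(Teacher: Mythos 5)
Your proof is correct, and it pivots on a reformulation that the paper states only informally and never actually exploits. The paper remarks in prose that the brokerage ``counts the number of $ij$-middleman sets that a node is a member of,'' but its proof of Theorem~\ref{middlemanpowert} does not use this: it explicitly omits the proof of the upper bound in (a) as too tedious, and it establishes (b) and (c) by directly manipulating the successor-set differentials $\sum_{j \neq i} \left[ \# S_j(D) - \# S_j(D-i) \right]$, which for (c) results in a rather opaque chain of inequalities. By first proving the identity $b_i(D) = \#\{(j,k) \mid i \in M_{jk}(D)\}$ --- whose verification (splitting the sum over $j \in P_i(D)$ versus $j \notin P_i(D) \cup \{i\}$, and noting each $j \in P_i(D)$ loses exactly $i$ plus the nodes $k$ with $i \in M_{jk}(D)$) is sound --- you obtain all three parts by bookkeeping, and in particular you supply the one piece the paper leaves out: the bound $b_i(D) \leqslant B(D)$, obtained by injecting the counted pairs into those enumerated by $B(D)$, together with a clean treatment of the degenerate case $B(D) = 0$ and the $\max\{B(D),1\}$ denominator. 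Part (b) uses the duality Theorem~\ref{duality} in the same way as the paper. One small definitional caveat: the paper's formal definition of a path requires $m \geqslant 3$ nodes, under which a lone arc $jk \in D$ is not literally a $jk$-path; your step ``$jk \in D$ implies the pair $(j,k)$ is not counted'' is nonetheless exactly the content of Proposition~\ref{TheoremIntermediary}(ii), so it is safest to cite that result rather than argue from the arc directly.
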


\begin{proof}
We show the two assertions subsequently.
\\[1ex]
\emph{Proof of (a):}
We omit a mathematical proof of (a), due to its tedious nature. Instead, we provide an intuitive, more descriptive reasoning.
\\
A middleman cannot take advantage of more than all brokerage opportunities present in a network; therefore $B'(D) \geqslant b_{i}(D)$, implying $\nu_{i}(D) \leqslant 1$\footnote{Only in a network where a middleman rests on all geodesic paths of length two, for example an undirected star, it holds that $B'(D) = b_{i}(D)$.}.
\\
Furthermore, neither $b_{i}(D) < 0$ nor $B(D) < 0$. The minimum brokerage of some node $k$ is in an empty network where $\# S_{k}(D) = \# P_{k}(D) = 0$. In that case, $\sum_{i \in N} \# S_{i}(D) = \sum_{i \neq k} \# S_{i}(D - \{k\})$ since $k$ has no connectivity in the network. Therefore, $\nu_i (D) \geqslant 0$ for any node $i \in N$.
\\[1ex]
\emph{Proof of (b):}
Let $h \in M_D$ be a contested intermediary in the network $D$. Theorem~\ref{duality} asserts a duality between being a contested intermediary and being a non-middleman. Definition~\ref{middleman} implies that $h \in M_D$ is not a middleman if there is no pair $i,j \in N$ with $i \neq j$ such that $h$ lies on all paths from $i$ to $j$ in $D$. Hence, $h \notin \cap \mathcal{W}_{ij}(D)$ for all $i,j \in N$ with $i \neq j$.
\\
Since $h$ is an intermediary it holds that $\# P_{h}(D) > 0$, $\# S_{h}(D) > 0$ as well as $\sum_{i \in N} \# S_{i}(D) > \sum_{i \in N \setminus \{h\}} \# S_{i}(D - h)$ since the nodes in $D - h$ can obviously not connect to $h$. Also, since the connectivity of the network $D$ is not affected by the removal of node $h$, it holds that the removal of $h$ only affects the connectivity with $h$ itself. Hence, $\sum_{i \in N} \# S_{i}(D) - \sum_{i \in N \setminus \{ h \}} \# S_{i}(D - h) = \# S_{h}(D) + \# P_{h}(D)$. Therefore, $b_h (D) =0$, implying that $\nu_{h} (D) = 0$.
\\[1ex]
\emph{Proof of (c):}
Let $i \in \mathbf M (D)$ be a middleman in the network $D$. Then by Theorem \ref{duality}, node $i$ is uncontested and, therefore, assertion (b) does not apply to $i$.
\\
Since $i$ is a middleman, there exist at least two nodes $h,j \in N \setminus \{ i \}$ with $h \neq j$ such that $i$ is an $hj$-middleman. Hence, $j \in S_h (D)$ and $j \notin S_h (D-i)$. Furthermore, since $i$ is a $hj$-middleman it also holds that $j \in P_i (D) \setminus S_i (D)$ and $h \in S_i (D) \setminus P_i(D)$. This implies that
\begin{align*}
	b_i(D) + \#P_{i}(D) & =  \sum_{i' \neq i} \left[ \, \# S_{i'}(D) - \# S_{i'}(D - i) \, \right] \geqslant \\
	& \geqslant \# S_{j}(D) - \# S_{j}(D - i) \geqslant \\
	& \geqslant \#P_{i}(D) +1.
\end{align*}
Hence, $b_i(D) \geqslant 1$, showing the assertion.
\end{proof}

\noindent
Theorem \ref{middlemanpowert} states that the middleman measure $\nu (D) \colon N \to [0,1]$ indeed only assigns a non-zero value to the middlemen in $D$. It ranks the middlemen in $D$ according to the number of flows that are controlled by each of these middlemen. The higher the middleman measure of a middleman $i \in \mathbf M (D)$, the more control that middleman exercises in the network.

\citet{BlochJackson2016} show that, although prominent centrality measures in network analysis make use of different information about each node's position in the network, these measures all originate from a common set of principles that are characterised by the same simple axioms. In particular, these standard measures are all based on a monotonic and additively separable treatment of a network statistic that captures a node's position in the network.

Our middleman measure is not subject to the analysis introduced by \citet{BlochJackson2016}, since it is not founded on any of the network statistics identified there. It forms a truly alternative way to assess the importance of a node in a network. Applications show that the middleman measure indeed identifies nodes with the highest impact on the functionality of the network. (See also Section 5 of this paper.)

\subsection{The robustness of middlemen in networks}

The control that middlemen exert in a network can be affected by deliberate actions of other parties in that network. Indeed, other nodes can create new relationships in the network and sever arcs to counter the power and control exerted by a middleman. This refers to a \emph{dynamic} element, represented as a modification of the architecture or topology of the network to render a middleman to a non-middleman position. Therefore, a more robust middleman position is less susceptible to a change in the topological structure of the network.

We refer to the number of changes required in the network to render a middleman powerless as the \emph{robustness} of that middleman. We perceive middleman robustness to infer how a given middleman can maintain an exploitive position given a change to the topological structure of the network from the addition or deletion of arcs.

We introduce three methods in which to measure middleman robustness: The first two measures relate the robustness of a node's exploitive position given the deletion and addition of arcs; The third method measures middleman robustness given the removal of all arcs from and to a certain set of nodes from the initial network. We consider the first arc robustness measure as the most essential one, referring to it as the ``robustness'' of a middleman.

\paragraph{Arc robustness.}

The \emph{arc-robustness}---or simply \emph{robustness}---of a middleman in a network is defined as the minimum number of arcs that have to be \emph{added} to the network in order for a middleman to be rendered inessential for maintaining all flows in the network.

The dual formulation of robustness---denoted as  \emph{dual robustness}---measures the minimum number of arcs that have to be \emph{removed} from the network such that a given middleman loses its brokerage function and power.

\begin{definition} \label{robustness}
Let $D$ be some network on node set $N = \{1,\ldots,n\}$ such that $\mathbf{M}(D) \neq \varnothing$. Furthermore, let $i \in \mathbf{M}(D)$ be a middleman in $D$.
\begin{abet}
	\item The \textbf{robustness} of middleman $i$ is given by
	\begin{equation}
		\rho_{i} (D) = \min \, \left\{ \, \# D' \mid D \subset D' \mbox{ and } i \notin \mathbf{M}(D') \, \right\} - \# D
	\end{equation}
	
	\item The \textbf{dual robustness} of middleman $i$ is given by
	\begin{equation}
		\rho_{i}^{\star} (D) = \# D - \max \, \left\{ \# D' \mid D' \subset D \mbox{ and } i \notin \mathbf{M}(D') \, \right\}
	\end{equation}
\end{abet}
\end{definition}

\noindent
Obviously, the cardinality of $D$ is equal to the number of arcs in the network. The arc-robustness of a middleman is defined as the minimum number of arcs that need to be added to the initial network $D$ in order for a middleman to be completely circumvented and subsequently lose its position as a middleman. The dual-robustness measure is equal to the minimum number of arcs that need to be removed from the initial network $D$ such that a given middleman no longer has an exploitive position in the resulting network $D' \subset D$.

If a node is a middleman then the value of both its arc-robustness and its dual-robustness will always be a positive integer. However, neither of these robustness measures will necessarily be correlated with the middleman's brokerage measure; there exist situations in which a middleman's brokerage can be increased by extending the network beyond that middleman, while keeping the middleman's robustness and dual-robustness constant.

An alternative interpretation of the robustness of a middleman is how easily this middleman can become contested by a coalition of other nodes. A low value for its robustness $\rho_{i}$ implies that it takes relatively few new relationships to make that middleman, $i$, contestable.

The following result gives bounds on these two arc robustness measures for an arbitrary middleman.
\begin{proposition} \label{prop:bounds}
	Let $D$ be a network on the node set $N$ and let $i \in \mathbf M (D)$ be a middleman in $D$. Then it holds that
	\begin{gather}
		1 \leqslant \rho_i (D) \leqslant \min \, \{ b_i (D) , d^+_i + d^-_i -1 \} \\
		1 \leqslant \rho^{\star}_i (D) \leqslant \min \, \{ d^+_i , d^-_i \}
	\end{gather}
\end{proposition}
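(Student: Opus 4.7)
The plan is to prove each of the four bounds separately, treating the two lower bounds as immediate and the two upper bounds in each inequality by exhibiting an explicit modification of $D$ of the appropriate size. The lower bounds $\rho_i(D) \geqslant 1$ and $\rho^{\star}_i(D) \geqslant 1$ will follow at once: by hypothesis $i \in \mathbf M (D)$, so the choice $D' = D$ (which corresponds to adding or deleting zero arcs) fails the defining condition $i \notin \mathbf M (D')$ in both definitions. For the bound $\rho^{\star}_i(D) \leqslant \min\{d_i^+, d_i^-\}$, I will delete either all outgoing arcs of $i$ or all incoming arcs of $i$, whichever set is smaller. The resulting network $D'$ satisfies $d_i^+ = 0$ or $d_i^- = 0$, so $i$ is no longer an intermediary of $D'$ and therefore $i \notin \mathbf M (D')$ by Definition~\ref{middleman}.

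For the bound $\rho_i(D) \leqslant b_i(D)$, I will let $B = \{\, (j,k) \mid i \in M_{jk}(D)\, \}$, so that $\# B = b_i(D)$. Observe first that each $(j,k) \in B$ has $jk \notin D$, since otherwise $jk$ would be a $jk$-path in $D$ not containing $i$, contradicting $i \in \cap \mathcal W_{jk}(D)$. Adding all arcs in $B$ produces $D' \supset D$ with $\# D' - \# D = b_i(D)$, and I then verify $i \notin \mathbf M (D')$. For a candidate pair $(j',k')$: if some $j'k'$-path in $D$ already avoids $i$, it survives in $D'$; if $(j',k') \in B$, then the direct arc $j'k' \in D'$ gives a one-step $i$-free path; otherwise $\mathcal W_{j'k'}(D) = \varnothing$ and any walk through $i$ in $D'$, say $\cdots \to v \to i \to w \to \cdots$, can be rerouted at $i$ either by the new arc $vw \in D'$ (when $(v,w) \in B$) or by a pre-existing $D$-path from $v$ to $w$ avoiding $i$ (when $(v,w) \notin B$), because $(v,w) \notin B$ precisely means $i$ does not broker $(v,w)$ in $D$. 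Extracting a simple path from the resulting $i$-free walk gives a $j'k'$-path in $D'$ that omits $i$.

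For the bound $\rho_i(D) \leqslant d_i^+ + d_i^- - 1$, I will fix once and for all a direct predecessor $p^{\star} \in p_i(D)$ and a direct successor $s^{\star} \in s_i(D)$ (both exist because $i$ is an intermediary) and add at most $d_i^- - 1$ arcs $p \to p^{\star}$ for $p \in p_i(D) \setminus \{p^{\star}\}$, at most $d_i^+ - 1$ arcs $s^{\star} \to s$ for $s \in s_i(D) \setminus \{s^{\star}\}$, and the single arc $p^{\star}s^{\star}$, for a total of at most $d_i^+ + d_i^- - 1$ additions. For any pair $(j,k)$ with $i \in M_{jk}(D)$, any $jk$-path has the form $j \to \cdots \to p_l \to i \to s_m \to \cdots \to k$ with $p_l \in p_i(D)$ and $s_m \in s_i(D)$; in $D'$ the walk $j \to \cdots \to p_l \to p^{\star} \to s^{\star} \to s_m \to \cdots \to k$ avoids $i$, and shortcutting yields a $jk$-path in $D'$ avoiding $i$. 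The main obstacle in both upper bounds is precisely this shortcutting step: one must argue carefully that the rerouted walks remain in $D'$ and that simple paths extracted from them still avoid $i$, since the middleman hypothesis initially rules out every $D$-path not containing $i$.
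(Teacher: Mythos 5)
Your proposal is correct and follows essentially the same route as the paper, which only sketches these constructions informally: adding a direct arc for each brokered pair to get the $b_i(D)$ bound, adding a detour of $d^+_i+d^-_i-1$ arcs around the middleman through its direct predecessors and successors (the paper uses a chain, you use a hub at $p^{\star}$ and $s^{\star}$, which is immaterial), and deleting all incoming or all outgoing arcs for the dual bound. Your write-up is in fact more careful than the paper's, and the only point worth making explicit is that $p^{\star}\neq s^{\star}$ can always be chosen because the definition of an intermediary guarantees a predecessor and a successor that are distinct.
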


\noindent
The lower bounds stated in Proposition \ref{prop:bounds} are obvious. Indeed, a middleman is critical for the connection for at least one pair of nodes. So, robustness requires the introduction of at least one new arc to connect such a pair. Similarly, dual robustness requires the removal of at least one arc to make a middleman obsolete. This lower bound is attained for both robustness measures in a circular directed network in which all nodes are middlemen and each node can be circumvented by an arc from its direct predecessor to its direct successor---or by the removal of the single incoming or outgoing arc of that middleman.

The proof of the stated upper bound for the robustness measure $\rho$ in Proposition \ref{prop:bounds} is based on some simple insights. First, it should be clear that in every network, by linking the node pairs that are counted in the brokerage measure $b_i$ of some middleman, one indeed circumvents that middleman completely. So, the robustness of a middleman never exceeds the number of pairs that are counted in the brokerage of that middleman.

Furthermore, the example also shows that the second identified upper bound $d^+_i + d^-_i -1$ is attained exactly in a star network. The methodology of circumventing a middleman as constructed in the example, applies in general to any middleman in any network, thus showing that this upper bound indeed applies generally.

Finally, the upper bound for the dual robustness measure is identified from the next example as well. Indeed, in a star network either all incoming arcs or all outgoing arcs need to be removed to make a middleman a source or a sink in the resulting modified network. This methodology again applies generally to any middleman in any network, thus providing an upper bound. As shown in the example, this upper bound is attained in a star network.

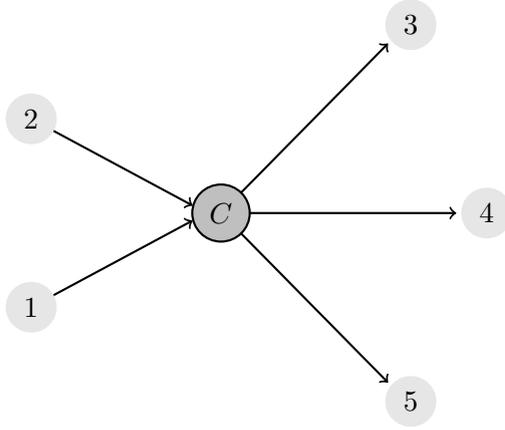
\begin{figure}[h]
\begin{center}
\begin{tikzpicture}[scale=0.5]
\draw[thick, ->] (0,2.5) -- (4.25,4.8);
\draw[thick, ->] (0,7.5) -- (4.25,5.2);

\draw[thick, ->] (5,5) -- (9.4,9.5);
\draw[thick, ->] (5,5) -- (11.2,5);
\draw[thick, ->] (5,5) -- (9.4,0.5);

\draw (0,2.5) node[circle,fill=black!10] {$1$};
\draw (0,7.5) node[circle,fill=black!10] {$2$};
\draw (5,5) node[circle,fill=black!25,draw,thick] {$C$};
\draw (10,10) node[circle,fill=black!10] {$3$};
\draw (12,5) node[circle,fill=black!10] {$4$};
\draw (10,0) node[circle,fill=black!10] {$5$};

\end{tikzpicture}
\end{center}
\caption{A star network $D$ in Example \ref{ex:Star}.}
\label{fig:Star}
\end{figure}

\begin{example} \label{ex:Star}
	Consider the network $D$ depicted in Figure \ref{fig:Star} on the node set $N = \{ 1, \ldots ,5,C \}$. Clearly, $\mathbf M (D) = \{ C \}$ is the unique middleman in this network with degrees $d^-_C =2$ and $d^+_C =3$. The brokerage of this middleman is determined as $b_C (D) = d^-_C \cdot d^+_C = 2 \cdot 3 =6$.
	\\
	The middleman $C$ can be rendered powerless by introducing a minimum of 4 new arcs into the network $D$ to form network $D'$, depicted in Figure \ref{fig:StarMod}. This shows that, indeed, the robustness of middleman $C$ in the network $D$ is exactly the indicated upper bound in Proposition \ref{prop:bounds}: $\rho_C (D) = d^-_C + d^+_C -1 =4$, which is based on the length of the semi-circular path around the middleman $C$ to connect the direct predecessors and direct successors of $C$.
	\\
	Finally, we also note that middleman $C$ in network $D$ can be rendered powerless by making $C$ a source by removing the arcs from nodes 1 and 2 to $C$. This indeed shows that in a star network the dual robustness of the middleman $C$ is indeed equal to the indicated upper bound in Proposition \ref{prop:bounds}: $\rho_C^{\star} (D) = \min \, \{ d^+_C , d^-_C \} = \min \{ 2,3 \} =2$.
\end{example}

\begin{figure}[h]
\begin{center}
\begin{tikzpicture}[scale=0.5]
\draw[thick, ->] (0,2.5) -- (4.25,4.8);
\draw[thick, ->] (0,7.5) -- (4.25,5.2);

\draw[thick, ->] (5,5) -- (9.4,9.5);
\draw[thick, ->] (5,5) -- (11.2,5);
\draw[thick, ->] (5,5) -- (9.4,0.5);

\draw[dashed,thick, ->] (0,2.5) -- (0,6.75);
\draw[dashed,thick, ->] (0,7.5) -- (9.2,10);
\draw[dashed,thick, ->] (10,10) -- (12,5.75);
\draw[dashed,thick, ->] (12,5) -- (10,0.75);

\draw (0,2.5) node[circle,fill=black!10] {$1$};
\draw (0,7.5) node[circle,fill=black!10] {$2$};
\draw (5,5) node[circle,fill=black!10] {$C$};
\draw (10,10) node[circle,fill=black!10] {$3$};
\draw (12,5) node[circle,fill=black!10] {$4$};
\draw (10,0) node[circle,fill=black!10] {$5$};

\end{tikzpicture}
\end{center}
\caption{The modified star network $D'$ in Example \ref{ex:Star}.}
\label{fig:StarMod}
\end{figure}
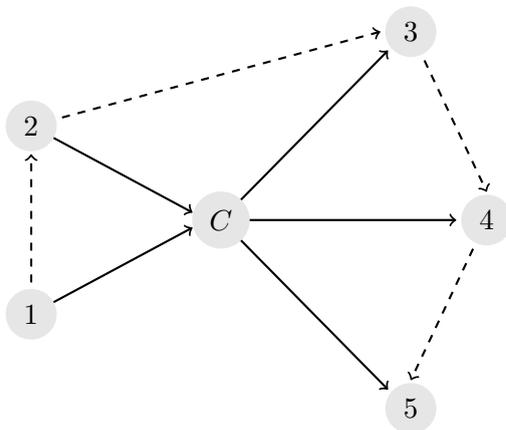

\paragraph{Node robustness.}

The \emph{node-robustness} of a middleman is an extension of the dual (arc) robustness measure. The node-robustness of a middleman is defined in terms of the minimum number of nodes that need to be deleted from the network in order for a given node to lose its middleman position. The removal of a node $i \in N$ from network $D$ is equivalent to the deletion of all arcs to its direct neighbours, $j \in s_i (D) \cup p_i(D)$.

\begin{definition}
Let $D$ be a network on node set $N$ such that $\mathbf{M}(D) \neq \varnothing$. Furthermore, let $i \in  \mathbf{M}(D)$ be a middleman in $D$. The \textbf{node--robustness} of middleman $i$ is now given as
\begin{equation}
\psi_{i} (D) = \min \, \{ \, \# C \mid C \subset N \setminus \{ i \} \mbox{ such that } i \notin \mathbf{M}(D-C) \, \}
\end{equation}
where $D-C = \{ ij \in D \mid i,j \in N \setminus C \, \}$ is the network resulting from $D$ through the removal of the node set $C$.
\end{definition}

\noindent
It can easily be checked that the node-robustness of a middleman is at most its dual robustness. Indeed, every arc that is deleted to render the middleman powerless, can be replaced by the removal of the originating node of each removed arc.
\begin{proposition} \label{prop:NodeBounds}
	Let $D$ be a network on the node set $N$ and let $i \in \mathbf M (D)$ be a middleman in $D$. Then it holds that
	\begin{equation}
		1 \leqslant \psi_i (D) \leqslant \rho^{\star}_i (D) \leqslant \min \, \{ d^+_i , d^-_i \} .
	\end{equation}
\end{proposition}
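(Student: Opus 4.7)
The plan is to verify the three inequalities in turn. The rightmost bound $\rho^{\star}_{i}(D) \leqslant \min\{d^{+}_{i}, d^{-}_{i}\}$ is already contained in Proposition~\ref{prop:bounds} and may be cited directly. The leftmost bound $\psi_{i}(D) \geqslant 1$ is immediate, since $i \in \mathbf{M}(D - \varnothing) = \mathbf{M}(D)$ means that the empty node set does not destroy $i$'s middleman status, so at least one node must be removed from $N$ to render $i$ powerless. All of the substantive work therefore lies in the middle inequality $\psi_{i}(D) \leqslant \rho_{i}^{\star}(D)$.

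For the middle inequality I would start by fixing a witness $A \subset D$ with $\# A = \rho_{i}^{\star}(D)$ and $i \notin \mathbf{M}(D - A)$, and then build a node set $C$ from $A$ one arc at a time. For each arc $uv \in A$, irreflexivity of $D$ gives $u \neq v$ and at most one of $u, v$ equals $i$, so an endpoint $w(uv) \in \{u,v\} \setminus \{i\}$ can always be chosen. Setting $C = \{w(a) \mid a \in A\}$ yields $\# C \leqslant \# A$ and $i \notin C$, and to conclude $\psi_{i}(D) \leqslant \# A = \rho^{\star}_{i}(D)$ it suffices to show $i \notin \mathbf{M}(D - C)$.

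The first structural observation is that every arc of $D - C$ has both endpoints outside $C$ and therefore cannot belong to $A$, so the arc set of $D - C$ is contained in the arc set of $D - A$; consequently every $pq$-path in $D - C$ is also a $pq$-path in $D - A$. Suppose for contradiction that $i \in \mathbf{M}(D - C)$ via some pair $(p,q)$ with $p, q \in N \setminus (C \cup \{i\})$. Then $D - C$ contains a $pq$-path through $i$, which is simultaneously a $pq$-path through $i$ in $D - A$; and since $i \notin \mathbf{M}(D - A)$, there must exist an alternative $pq$-path $\pi$ in $D - A$ that avoids $i$.

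The main obstacle is ensuring that $\pi$---or some other $pq$-bypass of $i$ in $D - A$---can be taken to avoid $C$ as well, so that it survives in $D - C$ and contradicts $i \in \cap \mathcal{W}_{pq}(D - C) \setminus \{p,q\}$. A naive choice of $w(\cdot)$ might well place an interior node of $\pi$ into $C$, so I would refine the construction by an exchange argument: whenever $w(a) \in C$ lies on a needed bypass $\pi$, swap $w(a)$ for the other endpoint of $a$, which cannot also lie on $\pi$ (otherwise $a$ would be an arc of $\pi$, contradicting $\pi \subset D - A$). Iterating this exchange across all candidate middleman pairs yields the desired $C$ without increasing its cardinality. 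The delicate book-keeping---verifying that these simultaneous swaps remain mutually consistent, so that the final $C$ destroys $i$'s middleman status for every pair at once---is the point at which the argument would require the most care.
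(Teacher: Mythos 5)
Your treatment of the two outer inequalities is fine, and you have put your finger on exactly the right spot: the difficulty in deducing $\psi_{i}(D) \leqslant \rho^{\star}_{i}(D)$ is that replacing the witness arc set $A$ by a set of endpoints deletes \emph{more} arcs than $A$ itself, and the extra deletions can destroy precisely the bypass paths that certified $i \notin \mathbf{M}(D-A)$, creating new critical pairs for $i$. The paper's own justification (``every arc that is deleted \ldots can be replaced by the removal of the originating node of each removed arc'') simply ignores this, whereas you flag it and propose an exchange argument. Unfortunately the exchange cannot be completed, because the middle inequality is false. Take $N = \{a_1, a_2, i, c_1, c_2\}$ with arcs $a_1 i$, $a_2 i$, $i c_1$, $i c_2$, $a_1 c_2$, $a_2 c_1$, $c_1 a_1$. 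The only pair for which $i$ is essential in $D$ is $(a_1, c_1)$, whose unique path is $a_1 \to i \to c_1$; deleting the single arc $i c_1$ leaves $a_1$ unable to reach $c_1$ at all, while every other connected pair keeps a bypass avoiding $i$ (in particular $a_2 \to c_1 \to a_1 \to c_2$), so $\rho^{\star}_{i}(D) = 1$. But no single node removal works: deleting $a_2$ or $c_2$ leaves $(a_1, c_1)$ critical; deleting $a_1$ or $c_1$ destroys the unique $i$-avoiding route $a_2 \to c_1 \to a_1 \to c_2$ while $a_2 \to i \to c_2$ survives, so $i$ becomes an $a_2 c_2$-middleman. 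Hence $\psi_{i}(D) = 2 > 1 = \rho^{\star}_{i}(D)$.

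The example also shows concretely where your exchange step has nowhere to go: both minimal arc witnesses ($\{i c_1\}$ and $\{a_1 i\}$) are incident to $i$, so the ``other endpoint'' you would like to swap to is $i$ itself, which is excluded from $C$ by definition. The failure is therefore not delicate book-keeping but the statement itself. What your outer steps do establish is the weaker chain $1 \leqslant \psi_{i}(D) \leqslant \min\{d^{+}_{i}, d^{-}_{i}\}$, obtained by deleting all direct predecessors or all direct successors of $i$ so that $i$ ceases to be an intermediary; that bound is correct and is all that survives of the middle inequality.
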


\noindent
The identified bounds in Proposition \ref{prop:NodeBounds} can be illustrated with a simple line network consisting of three nodes, illustrated in Figure \ref{fig:Line} below, in which the middleman's node robustness is strictly lower than its dual robustness.

In Figure \ref{fig:Line}, node 2 is the unique middleman, while nodes 1 and 3 are leaf nodes. The removal of a single arc retains 2's middleman position. Therefore, $\rho^{\star}_2 =2 = d^+_2=d^-_2$. On the other hand, the node robustness of node 2 is determined by the fact that the removal of either node 1 or node 3 is sufficient to render node 2 no longer being a middleman. Hence, $\psi_2 = 1 < \rho^{\star}_2$.

\begin{figure}[h]
\begin{center}
\begin{tikzpicture}[scale=0.5]
\draw[thick, ->] (0,0.25) -- (4.3,0.25);
\draw[thick, ->] (5,-0.25) -- (0.7,-0.25);

\draw[thick, ->] (5,0.25) -- (9.3,0.25);
\draw[thick, ->] (10,-0.25) -- (5.7,-0.25);

\draw (0,0) node[circle,fill=black!10] {$1$};
\draw (5,0) node[circle,fill=black!25,draw,thick] {$2$};
\draw (10,0) node[circle,fill=black!10] {$3$};

\end{tikzpicture}
\end{center}
\caption{The difference between node robustness and dual link robustness}
\label{fig:Line}
\end{figure}
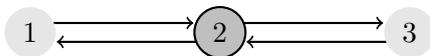

Furthermore, we point out that there are numerous networks in which the upper bound identified in Proposition \ref{prop:NodeBounds} is tight. Indeed, in the star network depicted in Figure \ref{fig:Star}, the middleman $C$ has equal dual link robustness and node robustness: $\psi_C = \rho^{\star}_C = \min \{ d^+_C , d^-_C \} = 2$.

\section{An application to two empirical networks}
\label{sec:empiricalNetwork}

We apply our middleman power and robustness measures to two well-known social networks. From the assessment of these networks we provide a discussion regarding the potential of middlemen in these networks. The results of middleman power are compared with other measures of centrality. This is done in terms of reference; we refrain from correlating the results of these measures because we showed above that middleman power measures different aspects of a node than other measures.

\subsection{Middlemen in Krackhardt's advice network}

\begin{figure}[h]
\centering
\includegraphics[scale=0.45]{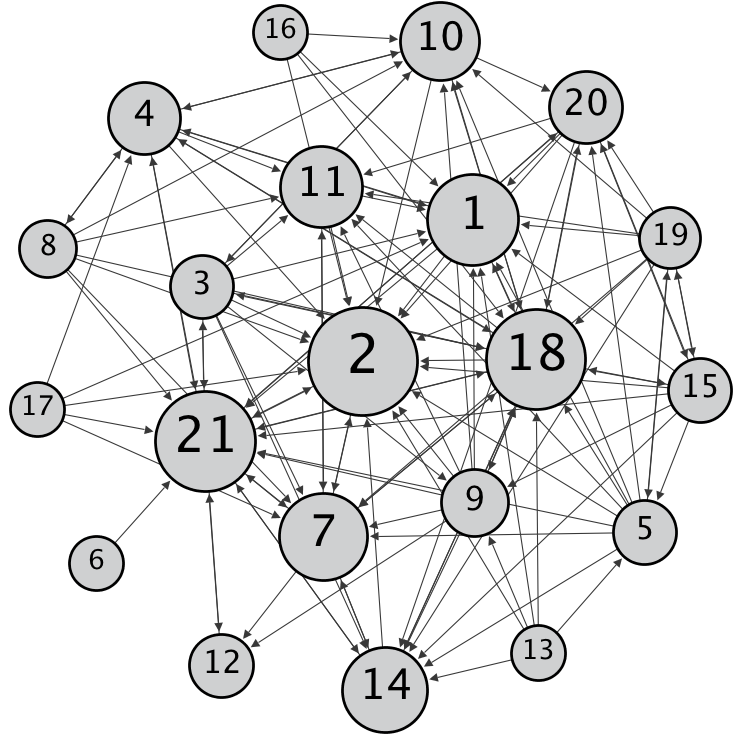}
\caption{Krackhardt's network of advice among managers}
\label{krackhardtnetwork}
\end{figure}

We consider the well-known organisational advice network seminally investigated by \citet{Krackhardt1987}. Krackhardt investigated the relationships between managers in a middle-size firm,\footnote{We use data in the ``LAS'' matrix from p.~129 in the Krackhardt article as it seems to be the most objective measure.} consisting of 21 managers. He collected information from the managers about who sought advice from whom, depicted in Figure~\ref{krackhardtnetwork}. An arc from $i$ to $j$ denotes that manager $i$ has sought advice from manager $j$; therefore, an arc from $j$ to $i$ denotes that manager $j$ has provided advice to manager $i$. In this depiction the size of the node reflects its in-degree.

Middlemen are important for this particular network for a number of intuitive reasons: First, a middleman can block ideas, advice, and information from being transmitted from one group of managers to another. Second, a middleman can manipulate the information transferred from one group of managers to another.

\begin{table}[h]
\begin{center}
\begin{tabular}{|c|ccccc|}
\toprule
Manager       & $d^-_i$ & $d^+_i$& $E_i$    & $BC_i$   & $\nu_i$      \\
\midrule
1             & 12 & 4           & 0.068    & 0.035    & 0.000 \\
2             & 18 & 2           & 0.306    & 0.011    & 0.000 \\
3             & 3 & 9            & 1.271    & 0.018    & 0.000 \\
4*          & 6 & 7            & 1.001    & 0.071    & 0.090 \\
5             & 3 & 10           & 1.463    & 0.009    & 0.000 \\
6             & 0 & 1            & 0.172    & 0.000    & 0.000 \\
7             & 11 & 6           & 0.776    & 0.048    & 0.000 \\
8             & 1 & 7            & 1.013    & 0.001    & 0.000 \\
9             & 4 & 9            & 1.171    & 0.011    & 0.000 \\
10            & 8 & 5            & 0.820    & 0.018    & 0.000 \\
11            & 9 & 3            & 0.344    & 0.004    & 0.000 \\
12            & 3 & 1            & 0.172    & 0.000    & 0.000 \\
13            & 0 & 6            & 0.938    & 0.000    & 0.000 \\
14            & 10 & 4           & 0.625    & 0.002    & 0.000 \\
15*         & 3 & 9            & 1.265    & 0.092    & 0.161 \\
16            & 0 & 4            & 0.580    & 0.000    & 0.000 \\
17            & 0 & 5            & 0.673    & 0.000    & 0.000 \\
18            & 15 & 12          & 1.745    & 0.231    & 0.000 \\
19            & 2 & 10           & 1.493    & 0.002    & 0.000 \\
20            & 6 & 7            & 1.028    & 0.028    & 0.000 \\
21**        & 15 & 8           & 1.348    & 0.176    & 0.147 \\
\bottomrule
\end{tabular}
\caption{Influence, centrality, and middlemen in Krackhardt's advice network}
\label{tabkrackhardt}
\end{center}
\end{table}

Table~\ref{tabkrackhardt} reports the characteristics of this network. Here, we report the in- and out-degrees; the Bonacich centrality index $E$ \citep{Bonacich1972,Bonacich1987}; the betweenness centrality $BC$; and the middleman power measuree $\nu$. Furthermore, a single star (*) indicates a regular middleman and a double star (**) indicates a strong middleman. We identify two regular middlemen, managers 4 and 15, and one strong middleman, manager 21. 

Middleman 15 has the highest middleman power in the organisation, controlling a total of 34 relationships. This is also reflected in that \citet{Krackhardt1987} highlighted manager 15 as an important agent in the organisational advice network. However, Node 15 does not have the highest betweenness or Bonacich centralities. Instead, Node 18 is the most prominent in terms of Bonacich and betweenness scores, although not being a middleman in the network.\footnote{The high betweenness and Bonacich centality measures might be a function of the in- and out-degree of Node 18.}

The reported Bonacich and betweenness centrality measures for the Krackhardt network confirm that both seem to be poor indicators for ranking middlemen: Node 15 is the most powerful middleman in the network, but has a Bonacich and betweenness centrality lower than Node 21. The Bonacich influence model does not consider the fact that middlemen are potentially able to exploit their position by using information from others and blocking the transmission of certain information and ideas.

\subsection{The elite Florentine marriage network}

The marriage network of elite houses in renaissance Florence has been used extensively to assess the effectiveness of many centrality measures to highlight positions of importance and influence \citep{Newman2003betweenness}. It has been shown in contemporary studies of this network that the Medici house had the highest centrality across a number of measures \citep[Chapter 2]{Jackson2008}. However, \citet{Roover1963}, \citet{Padgett1994}, and \citet{Goldthwaite2009} explain that their prominence in the marriage network is derived from their ability to access diverse sources of information within the Florentine aristocracy. As such, the Medici family filter information by choosing to allow or disallow information to spread between factions; thus largely monopolising inter-factional informational spread. Powerful brokerage opportunities, which the Medici took advantage of, emerged due to the inherent ``network disjunctures within the elite'' \citep[p.~1259]{Padgett1993}. In particular, \citet{Padgett1993} show that Cosimo de'Medici was able to gain access to, and control of, the flow of diverse information between opposing political factions and also between houses in the same faction. Within our context, its middleman position allowed the Medici family to attain power within Florentine society; especially, the Medici's ability to act as broker between a large number of houses crossing opposing political factions.

Keeping with the format and structure initially provided by \citet[p.~1276--77]{Padgett1993}---but unlike more recent renditions of the Florentine marriage network that are presented in terms of an undirected graph---we represent this network as a directed graph. An arc drawn from house $i$ to house $j$ illustrates a female from house or family $i$ married to a male in house $j$. The resulting marriage network is depicted in Figure~\ref{Fig:FlorentineFamilies}.\footnote{The data was initially gathered by \citet{Kent1978} and a block model network was constructed and used in \citet{Padgett1993} and \citet{Padgett1994}. The network provided in Figure~\ref{Fig:FlorentineFamilies} is directly derived from these studies. Both provide a rich analysis of the houses in Florence at this time.} Information flowed through these relationships, and marriages have often supported economic relationships in the form of trade, employment and loan provision~\citep{Kent2009}.

\begin{figure}[h]
\centering
\includegraphics[scale=0.37]{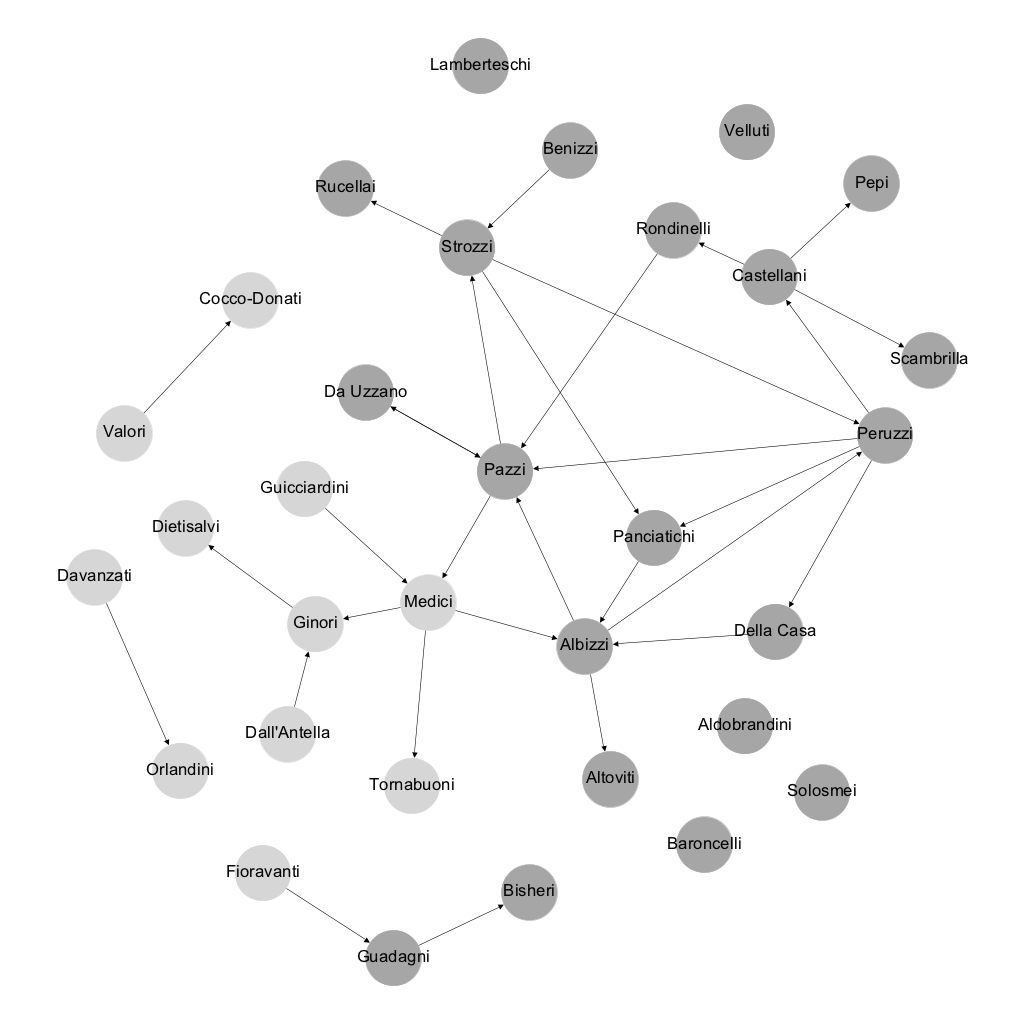}
\caption{Directed network of Florentine marriages (c. 1434)}
\label{Fig:FlorentineFamilies}
\end{figure}

Unlike more recent assessments of the Florentine marriage network we include houses as group of families; thus all nodes in the network represent a group of families under the same name. These houses are coloured depending on the factions that the houses were affiliated: light grey nodes are houses affiliated with the Medician faction and dark grey nodes are houses affiliated with the opposing Oligarch faction.

\paragraph{Network structure.}

There are 32 houses in this network: 11 houses in the Medician faction and 21 houses in the oligarchic faction. As depicted, the marriage network consists of 9 weakly connected components. The giant weakly connected component contains 62.5\% of all houses in the analysis. The network diameter is 6 with an average path length of $2.881$ and the directed network density is $0.031$. The degree distribution is similar to that of a power law. The maximal $k$-core is 4 and consists of the Pazzi, Peruzzi, Strozzi, and Albizzi houses.

The network produced in Figure~\ref{Fig:FlorentineFamilies} highlights a clear distinction between the connectivity of the two main factions. Only 3 out of 31 reported marriages were between houses in different factions. As such, it is clear that the Pazzi, Albizzi and Medici houses act as gate-keepers of the main information flows between the factions within the giant component. Each of these three houses acts as strong middlemen; however, the Medici is the only house that acts as a strong middleman between both factions. Indeed, the removal of the Medici leads to the partitioning of the factions in the giant component: The removal of either the Pazzi and Albizzi families does not have the same effect.

\paragraph{Centrality and power.}

We analyse the marriage network using in-degree $\left(d^-\right)$, out-degree $\left(d^+\right)$, \citet{Bonacich1987} eigenvector centrality ($E$), betweenness centrality ($BC$), and the normalised middleman power ($\nu$) of each house. The results of the analysis on middlemen is presented in Table~\ref{tabFlorence}. A full analysis of the centrality measures of all 32 houses in the network can be found in the table in the appendix to this paper.

\begin{table}
\begin{center}
\begin{tabular}{|l|ccccc|}
\toprule
House 	& $d^+_i$ & $d^-_i$ & $BC_i$    & $E_i$   & $\nu_i$  \\
\midrule
Albizzi** & 3 & 3 						 & 0.066 	 & 0.701   & 0.357 	  \\
Castellani**  & 3 & 1 						 & 0.034 	 & 0.310   & 0.187 	  \\
Ginori** & 1 & 2 						 & 0.014 	 & 0.257   & 0.076    \\
Guadagni** & 1 & 1 						 & 0.001 	 & 0.001   & 0.006 	  \\
Medici** & 3 & 2 						 & 0.058 	 & 0.506   & 0.269 	  \\
Pazzi**  & 3 & 4 						 & 0.093 	 & 1.000   & 0.503    \\
Peruzzi*  & 4 & 2 						 & 0.070 	 & 0.612   & 0.287    \\
Rondinelli*  & 1 & 1 						 & 0.014 	 & 0.157   & 0.076    \\
Strozzi** & 3 & 2 						 & 0.053 	 & 0.506   & 0.152    \\ 
\bottomrule
\end{tabular}
\caption{Middlemen in the directed Florentine marriage network}
\label{tabFlorence}
\end{center}
\end{table}

The simplest measurement of node centrality is the the number of direct successors---represented by the out-degree ($d^+_i$)---and the number of direct predecessors---represented by the in-degree ($d^-_i$)---of a node. The Peruzzi has the greatest number of direct successors (4) and the Pazzi have the greatest number of direct predecessors (4); the Pazzi also have the greatest sum of direct successors and direct predecessors (7).\footnote{In an undirected network the connections that a node has is given by the total number of connections that a node has; this is not necessarily the same as the sum of direct predecessors and direct successors as some node may be both a direct predecessor and direct successor.}

If the network were represented as an undirected network, the Pazzi, Albizzi, and Peruzzi are connected to 6 houses each. The Medici and Strozzi are connected to 5 houses each. There are two aspects to note from the assessment of node degree. First we note that, in general, families with a relatively higher degree are more prone to be middlemen in the network. This is true for most families, apart from the Guadagni who are conveniently positioned such that their single in-degree and single out-degree forms a middleman position. Second, we note that the Medici faction does not have the highest degree centrality relative to other families in the Oligarchic faction.

The Pazzi possesses the highest betweenness centrality in the directed network. The measure can favour middlemen thus typically ranking them higher; eight of the top ten houses in terms of their betweenness score $BC$ are either weak or strong middlemen. The Medici have the highest betweenness centrality ($0.166$) followed by the Pazzi ($0.142$). Unsurprisingly, the Bonacich centrality measure ranks the Pazzi highly, but also ranks many non-middlemen highly; specifically the Panciatichi house. From the network topology alone there is no indication to suggest that the Panciatichi should have had a prominent role in the Florentine aristocracy. Although the Medici rank highly with this measure, the relevance of an eigenvector centrality measure is questionable: there is no real reason to believe why the importance of a house would come from its degree and the degree of its neighbours alone.

Assessing the marriage network with the middleman power measure highlights the Medici family as a strong middleman, along with the prominent Albizzi, Castellani, Ginori, Pazzi, and Strozzi houses. It is, however, the Pazzi and Albizzi that have a greater middleman power measure ($0.503$ and $0.357$ respectively) than the Medici ($0.269$). The diversity of the Medici's brokered relationships extend further than those of the Albizzi and Pazzi as the Medici brokers between factions. If, however, the directed marriage network were converted into an undirected network such that information can flow in both directions, then the Medici becomes the most powerful middleman with a normalised middleman power of $0.470$; they are followed by the Ginori, Castellani and Strozzi families who have a normalised middleman power of $0.213$.


\paragraph{Middleman robustness.}


The robustness of each middleman position is measured in terms of the $\rho_{i}$-robustness, (dual) $\rho^{\star}_{i}$-robustness, and node $\psi_{i}$-robustness measures, reported in Table~\ref{FlorenceRobust}. In general we find that the $\rho^{\star}_{i}$- and $\psi_{i}$-robustness measures provide identical results. The directed network highlights the Albizzi, Pazzi, and Medici as being robust; especially in terms of the $\rho_{i}$-robustness measure. The results also suggests that there is a clear distinction between the consistency of some middlemen over others. The Medici is highlighted as being the most robust strong middleman in the undirected network. Notably, even though the Pazzi and Albizzi were robust in the directed network they both perform poorly in terms of the undirected representation of the marriage network.

\begin{table}
\centering
\begin{tabular}{|l|ccc|ccc|}
\toprule
  & \multicolumn{3}{|c|}{Directed}   & \multicolumn{3}{|c|}{Undirected}      \\
House   & $\rho_{i}$ & $\rho^{\star}_{i}$ & $\psi_{i}$ & $\rho_{i}$ & $\rho^{\star}_{i}$ & $\psi_{i}$ \\
\midrule
Albizzi**    & 4            & 3           & 3            & 1            & 1           & 1            \\
Castellani** & 3            & 3           & 3            & 2            & 2           & 2            \\
Ginori**     & 2            & 1           & 1            & 2            & 2           & 2            \\
Guadagni**   & 1            & 1           & 1            & 1            & 1           & 1            \\
Medici**     & 4            & 2           & 2            & 3            & 3           & 3            \\
Pazzi**      & 6            & 3           & 3            & 1            & 1           & 1            \\
Peruzzi*     & 2            & 2           & 2            & 0            & 0           & 0            \\
Rondinelli*  & 1            & 1           & 1            & 0            & 0           & 0            \\
Strozzi**    & 4            & 2           & 2            & 2            & 2           & 2            \\
\bottomrule
\end{tabular}
\caption{Robustness of middlemen in the Florentine marriage network}
\label{FlorenceRobust}
\end{table}

\medskip \noindent Centrality and robustness measures are purely topological and do not highlight the heterogeneous factions that exist within this network and the importance of information brokerage between them. Therefore they assume that relationships can be created and severed without social or institutional pressures. As a consequence it could be argued that, regardless of these robustness measures, the Medici has the highest middleman robustness due to the fact that they are weak and strong middlemen across both factions. Indeed, social and marriage relationships cannot be formed so seamlessly between families of opposing factions, therefore the robustness of the Medici's position is further strengthened due to the societal environment.


\section{Concluding remarks}
\label{sec:Conclusion}

Middlemen possess the ability to connect pairs of nodes who would otherwise be disconnected; this can have liberating externalities for those directly or indirectly connected to middlemen in the form of opening new exchange routes and channels of information. On the other hand, middlemen can exploit their position and, as such, extract positional rents. Whether a middleman behaves in an exploitative or facilitative is ambiguous and depends on the institutional governance of activities in the network.

Our middleman power measure provides a tool that measures the positional power of the middleman and, therefore, is an objective quantifier of the extractive and value-generating abilities of the middleman. We also note that the middleman power measure should not be considered as a replacement for other centrality measures. It is itself not just a measure of centrality; rather it identifies a certain type of node in a network and measures brokerage. The measure should be complimented with other measures of centrality.


\singlespace

\bibliographystyle{agsm}
\bibliography{OSDB}

\newpage

\appendix

\begin{center}

\section*{Appendix: Centrality in the Florentine network c.~1434}
\label{AppendixA}

\end{center}

\begin{table}[h]
\begin{center}
\begin{tabular}{|l|c|ccccc|}
\toprule
House & Faction & $d^+_i$ & $d^-_i$ & $BC_i$	&  $E_i$	&  $\nu_i$	\\
\midrule
Albizzi**       & Oligarch	& 3 & 3 & 0.066 & 0.701 & 0.357      \\
Aldobrandini    & Oligarch	& 0 & 0 & 0.000 & 0.000 & 0.000      \\
Altoviti        & Oligarch	& 0 & 1 & 0.000 & 0.355 & 0.000      \\
Baroncelli      & Oligarch	& 0 & 0 & 0.000 & 0.000 & 0.000      \\
Benizzi         & Oligarch	& 1 & 0 & 0.000 & 0.000 & 0.000      \\
Bisheri         & Oligarch	& 0 & 1 & 0.000 & 0.002 & 0.000      \\
Castellani**    & Oligarch	& 3 & 1 & 0.034 & 0.310 & 0.187      \\
C-Donati        & Medician	& 0 & 1 & 0.000 & 0.001 & 0.000      \\
Da Uzzano       & Oligarch	& 1 & 1 & 0.000 & 0.506 & 0.000      \\
Dall'Antella    & Medician	& 1 & 0 & 0.000 & 0.000 & 0.000      \\
Davanzati       & Medician	& 1 & 0 & 0.000 & 0.000 & 0.000      \\
Della Casa      & Oligarch	& 1 & 1 & 0.001 & 0.309 & 0.000      \\
Dietisalvi      & Medician	& 0 & 1 & 0.000 & 0.132 & 0.000      \\
Fioravanti      & Medician	& 1 & 0 & 0.000 & 0.000 & 0.000      \\
Ginori(**)      & Medician	& 1 & 2 & 0.014 & 0.257 & 0.076      \\
Guadagni**      & Oligarch	& 1 & 1 & 0.001 & 0.001 & 0.006      \\
Guicciardini    & Medician	& 1 & 0 & 0.000 & 0.000 & 0.000      \\
Lamberteschi    & Oligarch	& 0 & 0 & 0.000 & 0.000 & 0.000      \\
Medici**        & Medician	& 3 & 2 & 0.058 & 0.506 & 0.269      \\
Orlandini       & Medician	& 0 & 1 & 0.000 & 0.001 & 0.000      \\
Panciatichi     & Oligarch	& 1 & 2 & 0.005 & 0.566 & 0.000      \\
Pazzi**         & Oligarch    & 3 & 4 & 0.093 & 1.000 & 0.503      \\
Pepi            & Oligarch	& 0 & 1 & 0.000 & 0.157 & 0.000      \\
Peruzzi*        & Oligarch	& 4 & 2 & 0.070 & 0.612 & 0.287      \\
Rondinelli*     & Oligarch	& 1 & 1 & 0.014 & 0.157 & 0.076      \\
Rucellai        & Oligarch	& 0 & 1 & 0.000 & 0.256 & 0.000      \\
Scambrilla      & Oligarch	& 0 & 1 & 0.000 & 0.157 & 0.000      \\
Solosmei        & Oligarch	& 0 & 0 & 0.000 & 0.000 & 0.000      \\
Strozzi**       & Oligarch	& 3 & 2 & 0.053 & 0.506 & 0.152      \\
Tornabuoni      & Medician	& 0 & 1 & 0.000 & 0.257 & 0.000      \\
Valori          & Medician	& 1 & 0 & 0.000 & 0.000 & 0.000      \\
Velluti         & Oligarch	& 0 & 0 & 0.000 & 0.000 & 0.000      \\
\bottomrule
\end{tabular}
\label{tabFlorenceA}
\end{center}
\end{table}

\end{document}